\newtheorem{example}{Example}[section]
\newtheorem{theorem}{\bf Theorem}[section]
\newtheorem{definition}[theorem]{\textsl{\bf Definition}{}}
\newtheorem{lemma}[theorem]{\bf Lemma}
\begin{document}

\title[Quantum learning: optimal classification of qubit states]
{{\sf \bfseries Quantum learning: optimal classification of qubit states}}
\author{{\sf \bfseries 
M\u{a}d\u{a}lin Gu\c{t}\u{a}$^{1}$ and 
Wojciech  Kot{\l}owski$^{2}$}}

\address{$^{1}$School of Mathematical Sciences, University of Nottingham,
University Park, NG7 2RD Nottingham, United Kingdom \\
$^{2}$ CWI, Science Park 123, 1098 XG Amsterdam\\ 
E-mail: \textcolor[rgb]{0.00,0.00,1.00}{madalin.guta@nottingham.ac.uk}}
\date{\today}

\begin{abstract}
Pattern recognition is a central topic in Learning Theory with numerous applications  such as voice and text recognition, image analysis, computer diagnosis. The statistical set-up in classification is the following: we are given an i.i.d. training set 
$(X_{1},Y_{1}),\dots (X_{n},Y_{n})$ where $X_{i}$ represents a feature and 
$Y_{i}\in \{0,1\}$ is a label attached to that feature. The underlying joint  distribution of $(X,Y)$ is unknown, but we can learn about it from the training set and we aim at devising low error classifiers $f:X\to Y$ used to predict the label of new incoming features. 

Here we solve a quantum analogue of this problem,  namely  the classification of  two arbitrary {\it unknown} qubit states. Given a number of `training' copies from each of the states, we would like to `learn' about them by performing a measurement on the training set. The outcome is then used to design mesurements for the classification of future systems with unknown labels. We find the  asymptotically optimal classification strategy and show that typically, it performs strictly better than a plug-in strategy based on state estimation.

The figure of merit is the {\it excess risk} which is the difference between the probability of error and the probability of error of the optimal measurement when the states are known, that is the Helstrom measurement. We show that the excess risk has rate $n^{-1}$ and compute the exact constant of the rate.

\end{abstract}


\clearpage

\tableofcontents

\title[Quantum learning: optimal classification of qubit states]
\bigskip

\section{Introduction}

Statistical learning theory \cite{Mitchell97,DevroyeGyorfiLugosi96,Vapnik98,FriedmanHastieTibshirani03} is 
a broad research field stretching over statistics and computer science, whose general goal is to devise algorithms which have the ability to learn from data. One of the central learning problems is how to recognise patterns \cite{Bishop}, with practical applications in speech and text recognition, image analysis, computer-aided diagnosis, data mining.

The paradigm of Quantum Information theory is that quantum systems carry a new type of information with potentially revolutionary applications such as faster computation and secure communication \cite{NC00}. Motivated by these theoretical challenges, Quantum Engineering is developing new tools to control and accurately 
measure individual quantum systems \cite{Wiseman&Milburn}. In the process of engineering exotic quantum states, statistical validation has become a standard experimental procedure \cite{Smithey,Haffner} and Quantum Statistical Inference has passed from its purely theoretical status in the 70's \cite{Holevo,Helstrom} to a more practically oriented theory at the interface between the classical and quantum worlds \cite{Leonhardt,Hayashi.editor,Paris.editor,Barndorff-Nielsen&Gill&Jupp}. 

In this paper we put forward a new type of quantum statistical problem inspired by learning theory, namely {\it quantum state classification}.  Similar ideas have already appeared in the physics \cite{SasakiCarlini2002,BergouHillery2005,Hayashi&Horibe,Hayashi_et_al2006} and learning  \cite{AimeurBrassardGambs2006,AimeurBrassardGambs2007,Gambs2008} literature but here we emphasise the close connection with learning and we aim at going beyond the special models based on group symmetry and pure states. However, we limit ourselves to a two dimensional state which could be regarded as a toy model from the viewpoint of learning theory, but hope that more interesting applications will follow. 

Before explaining what quantum classification is, let us briefly mention the classical set-up we aim at generalising. In \emph{supervised learning} the goal is to learn to predict an output $y \in \mathcal{Y}$, given the input (object) $x \in \mathcal{X}$, where input and output are assumed to be correlated and have an {\it unknown} joint distribution $\mathbb{P}$ over $\mathcal{X}\times \mathcal{Y}$. 
To do this, we are first provided with a set of $n$ previously observed inputs 
with known output variables (called \emph{training examples}),
i.e. independent random pairs $(X_i,Y_i), i=1,\ldots,n$ drawn from  $\mathbb{P}$. Using the training set, we construct a function 
$h_{n} \colon \mathcal{X} \to \mathcal{Y}$ to predict the output for future, yet unseen objects. When $\mathcal{Y} = \{0,1\}$, i.e. the output is a binary variable, this is called \emph{binary classification} and is the typical set-up in pattern recognition. The input space is usually considered to be a subset of $p$-dimensional space $\mathbb{R}^p$, so that the object $x$ can be described by $p$ measurement values often called \emph{features}. This description is very general as it allows e.g. to handle categorical (non-numerical) values (encoded as integer numbers), images (e.g. measured brightness of each pixel corresponds to a separate feature), time series (features corresponds to the values of the signal at given times), etc. 

In this paper, we consider the classification problem in which the objects to be classified are quantum states. Simply, we have a quantum system prepared in either of two {\it unknown} quantum states and we want to know which one it is. 
As in the classical case, this only makes sense if we are also provided with training examples from both states, with their respective labels, from which we can learn about the two alternatives. How could such a scenario occur? Suppose we send one bit of information through a noisy quantum channel which 
is not known. To decode the information (the input in this case) we need to be able to classify the output states corresponding to the two inputs. Alternatively, the binary variable may be related to a coupling of the channel which we want to detect. 

Needless to say, quantum systems are {\it intrinsically statistical} and can be `learned' only by repeated preparation, so that the problem is really the quantum extension of the classical classification problem. On the other hand this is related to the problem of state discrimination which in the case of two hypotheses, has an explicit solution known as the Helstrom measurement \cite{Helstrom}. The point is that when the states are unknown, the Helstrom measurements is itself unknown and has to be learned from the training set. An intuitive solution would be a \emph{plug-in} procedure: first estimate the two states, and then apply the Helstrom measurement corresponding to the estimates on any new to-be-classified state. This indeed gives a reasonable classification strategy, but as we will see, 
this is {\it not} the best one. The optimal strategy in the asymptotic framework is to directly estimate the Helstrom measurement without intermediate states estimation. The optimality is defined by the natural figure of merit called {\it excess risk}, which is the difference between the expected error probability and the error probability of the Helstrom measurement. We show that the excess risk converges to zero with the size of the training set as $n^{-1}$ and the ratio 
between the optimal and state estimation plug-in risk is a constant factor.

Our analysis is valid for arbitrary mixed states and is performed in a {\it pointwise, local minimax} (rather than Bayesian) setting which captures the behaviour of the risk around any pair of states. The key theoretical tool is the 
recently developed theory of local asymptotic normality (LAN) for quantum states \cite{Guta&Kahn,Guta&Janssens&Kahn,Guta&Kahn2,Guta&Jencova} which is an extension of the classical concept in mathematical statistics introduced by Le Cam \cite{LeCam}. Roughly, LAN says that the collective state $\rho_{\theta}^{\otimes n}$ of $n$ i.i.d. quantum systems can be approximated by a simple Gaussian state of some classical variables and quantum oscillators. 
This was used to derive optimal state estimation strategies for arbitrary mixed states of arbitrary finite dimension, and also in finding quantum teleportation benchmarks for multiple qubit states \cite{Guta&Adesso}. In this paper, LAN is used to identify the (asymptotically) optimal measurement on the training set as 
{\it linear measurement} on two harmonic oscillators. Similarly to the case of state estimation such collective measurements perform strictly better than the local 
ones \cite{Bagan_et_al2004,Bagan_et_al2006}. Moreover, optimal learning collective measurement is different from the optimal measurement for state estimation, showing once again that generically, different quantum decision problems cannot be solved optimally simultaneously. 

{\it Related work.} Sasaki and Carlini \cite{SasakiCarlini2002} defined a  \emph{quantum matching machine} which aims at pairing a given `feature' state with the closest out of a set of `template' states. The problem is formulated in a Bayesian framework with uniform priors over the feature and template pure states which are considered to be unknown. 
Bergou and Hillery \cite{BergouHillery2005} introduced a discrimination machine, which corresponds to our set-up in the special case when the training
set is of size $n=1$. The papers \cite{Hayashi&Horibe,Hayashi_et_al2006} deal with the problem of quantum state identification as defined in this paper. The special case of Bayesian risk with uniform priors over pure states was solved in \cite{Hayashi&Horibe}, with the small difference that the learning and classification steps are done in a single measurement over $n+1$ systems. However, as in the case of state estimation \cite{Bagan&Gill}, the proof relies on the special symmetry of the prior and does not cover mixed states. Finally, the concept of quantum classification  was already proposed in a series of papers \cite{AimeurBrassardGambs2006,AimeurBrassardGambs2007,Gambs2008}. 
However, the authors mostly focused on problem formulation, reduction between different problem classes and general issues regarding learnability. Other related papers which fall outside the scope of our investigation are \cite{Gammelmark&Molmer,Bisio&Chiribella}.

This paper is organised as follows. Section \ref{sec.2} gives a short overview of the classical classification set-up and introduces its quantum analogue. 
Section \ref{sec.lan}  discusses the LAN theory with emphasis on the qubit case.  
In section \ref{sec.local.formulation} we reformulate the classification problem in the asymptotic (local) framework, as an estimation problem with quadratic loss for the training set. The main result is Theorem \ref{th.main} of Section \ref{sec.theorem} which gives the mimimax excess risk for the case of known priors. The case of unknown priors is treated Section \ref{sec.priors}. The optimal classifier is compared to the plug-in procedure based on optimal state estimation in Section \ref{sec.comparison.plugin}. 
The geometry of the problem is captured by the  Bloch ball illustrated in Figure 
\ref{fig.bloch2}.  We conclude the paper with discussions.  
\section{Classical and quantum learning}
\label{sec.2}
\subsection{Classical Learning}

Let $(X,Y)$ be a pair of random variables with joint distribution $\mathbb{P}$ over the measure space $(\mathcal{X} \times \{0,1\},\Sigma)$. In the classical setting $\mathcal{X}$ is usually a subset of $\mathbb{R}^p$ 
and $Y$ is a binary variable.

In a first stage we are given a {\it training} set of $n$ i.i.d. pairs $\{(X_1,Y_1),\ldots,(X_n,Y_n)\}$ with distribution $\mathbb{P}$, from which we would like to `learn' about $\mathbb{P}$. In the second stage we are presented with a new sample $X$ and we are asked to guess its unseen label $Y$. For this we construct a (random) {\it classifier} 
$$
\hat{h}_{n} \colon \mathcal{X} \to \{0,1\}
$$ 
which depends on the data $ (X_1,Y_1),\ldots,(X_n,Y_n)$. Its overall accuracy is measured in terms of the \emph{expected error rate} 
according the data distribution $\mathbb{P}$, 
$$
\mathbb{P}_e(\hat{h}_{n}) = \mathbb{P}(\hat{h}_{n}(X) \neq Y) = \mathbb{E}[1_{\hat{h}_{n}(X) \neq Y}],
$$
where $1_C$ is the indicator function equal to $1$ if $C$ is true, and $0$ otherwise.
However the error rate itself does not give a good indication on the performance of the learning method. Indeed, even an `oracle' who knows $\mathbb{P}$ exactly has typically a non-zero error: in this case the optimal $\hat{h}$ is the 
{\it Bayes classifier}  which chooses the label that is more probable with respect to conditional distribution 
$\mathbb{P}(y|x)$
\begin{equation}\label{eq.Bayes.classifier}
h^{*} (X) = 
\left\{
\begin{array}{ccc}
0 & if &\eta(X) \leq 1/2\\
1& if  &\eta(X) > 1/2
\end{array}
\right.
\end{equation}
where $\eta(x):= \mathbb{P}(Y=1 |x)$. The {\it Bayes risk} is  
$$
\mathbb{P}_e(h^{*}) = 
\mathbb{E}[\mathbb{E}[1_{\hat{h}^{*}(X) \neq Y}|X]]=  \frac{1}{2} \left(1 - \mathbb{E}[|1 - 2\eta(X)|]\right).
$$
An alternative view of the Bayes classifier which fits more naturally in the quantum set-up is the following. We are given data $X$ whose probability distribution is 
either $\mathbb{P}_{0}(X):= \mathbb{P}(X|Y=0)$ or 
$\mathbb{P}_{1}(X):= \mathbb{P}(X|Y=1)$ and we would like to test between the two hypotheses. We are in a Bayesian set-up where the hypotheses are chosen randomly with prior distributions $\pi_{i}= \mathbb{P}(Y=i)$. The optimal solution of this problem is the well known likelihood ratio test: we choose the hypothesis with higher likelihood
$$
h^{*} (X) = 
\left\{
\begin{array}{ccc}
0 & if &  \pi_{0} \mathbb{P}_{0}(X)>\pi_{1} \mathbb{P}_{1}(X) \\
&\\
1& if  & \pi_{0} \mathbb{P}_{0}(X)\leq\pi_{1} \mathbb{P}_{1}(X)
\end{array}
\right.
$$
which can be easily verified to be identical to the previously defined Bayes classifier. The Bayes risk can be written as 
\begin{equation}\label{bayes.risk.classical}
\mathbb{P}_{e}^{*}= \frac{1}{2}(1- \| \pi_{0}p_{0} -\pi_{1} p_{1}\|_{1} ),
\end{equation}
where $p_{i}$ are the densities of $\mathbb{P}(X|Y=i)$ with respect to some common reference measure.

Returning to the classfication set-up where $\mathbb{P}$ is unknown,  we  see 
that a more informative performance measure for $\hat{h}_{n}$ is the \emph{excess risk}:
\begin{equation}
\label{eqn:excess_risk}
R(\hat{h}_{n}) = \mathbb{P}_e(\hat{h}_{n}) - \mathbb{P}_e(h^{*})\geq 0
\end{equation}
which measures how much worse the procedure $\hat{h}_{n}$ performs compared to the performance of the oracle classifier. In statistical learning theory one is primarily interested in consistent classifiers, for which the excess risk converges to $0$ as $n \to \infty$, and then in finding classifiers with fast convergence rates \cite{DevroyeGyorfiLugosi96,Vapnik98}. But how to compare different learning procedures? One can always design algorithms which work well for certain distributions and badly for others. Here we take the statistical approach and consider that all prior information about the data is encoded in the {\it statistical model} $\{\mathbb{P}_{\theta} :\theta\in \Theta\}$ i.e. the data comes from a distribution which depends on some unknown parameter  
$\theta$ belonging to a parameter space $\Theta$. The later may be a subset of 
$\mathbb{R}^{k}$ (parametric) or a large class of distributions with certain `smoothness' properties (non-parametric). One can then define the maximum risk of $\hat{h}_{n}$
$$
R_{max}(\hat{h}_{n}) := \sup_{\theta\in \Theta} R_{\theta}(\hat{h}_{n})
$$
where $ R_{\theta}$ denotes the excess risk when the underlying distribution is $\mathbb{P}_{\theta}$. A procedure $\tilde{h}_{n}$ is called minimax if its 
maximum risk is smaller than that of any other procedure
\begin{equation}\label{eq.max.risk}
R_{max}(\tilde{h}_{n}) = \inf_{\hat{h}_{n}} R_{max}(\hat{h}_{n})=\inf_{\hat{h}_{n}}\sup_{\theta\in \Theta} R_{\theta}(\hat{h}_{n}) .
\end{equation}
Alternatively one can take a Bayesian approach and optimise the average risk with respect to a given prior over $\Theta$.

\begin{example}\label{ex.coin.toss} 
Let $(X,Y)\in\{0,1\}^{2}$ with unknown parameters 
$\eta(0),\eta(1)$ and $\mathbb{P}(X=0)$, satisfying $\eta(0) <1/2$ and 
$\eta(1)>1/2$. Then the Bayes classifier is $h^{*}(0)=0$ and $h^{*}(1)=1$. On the other hand, from the training sample one can estimate $\eta(i)$ and obtain the concentration result
$$
\mathbb{P}[\hat{\eta}_{n}(0)<1/2 ~{\rm and}~ \hat{\eta}_{n}(1)>1/2 ] =1- O(\exp(-cn)).
$$
Thus the plug-in estimator $\hat{h}_{n}$ obtained by replacing $\eta$ by $\hat{\eta}_{n}$ in \eqref{eq.Bayes.classifier} is equal to $h^{*}$ with high probability and 
the excess risk is exponentially small.
\end{example}

The crucial feature leading to exponentially small risk was the fact that the regression function $\eta(X)$ is bounded away from the critical value $1/2$. This situation is rather special but shows that the behaviour of the excess risk depends on the properties of $\eta$ around the value $1/2$. Let us look at another simple example with a different behaviour.
\begin{example}\label{ex.gaussian.learning}
Let $(X,Y)\in \mathbb{R}\times \{0,1\}$ with 
$$\mathbb{P}(X|Y=0)= N(a, 1), \qquad \mathbb{P}(X|Y=1)= N(b, 1)
$$ 
for some unknown means $a<b$, and $\mathbb{P}(Y=0)=1/2$. From Figure \ref{fig.gaussians} we can see that $p_{0} (x) \leq p_{1}(x)$ if and only if 
$x\geq (a+b)/2$ so that the Bayes classifier is 
$$
h^{*}(x) =\left\{
\begin{array}{ccc}
0 & if & x <(a+b)/2\\
1& if  & x \geq (a+b)/2
\end{array}
\right.
$$
The Bayes risk is equal to the orange area under the two curves. Again a natural classifier is obtained by estimating the midpoint $(a+b)/2$ and plugging into the above formula. The additional error is the area of the green triangle. 
Since $(\hat{a}+\hat{b})/2-(a+b)/2\approx 1/\sqrt{n}$ one can deduce that 
$$
R(\hat{h}_{n}) =O(n^{-1}),
$$
and it can be shown that this rate of convergence is optimal 
\cite{AudibertTsybakov07}.
\end{example}

\begin{figure}
\begin{center}
\label{fig.gaussians}
\includegraphics[width=6cm]{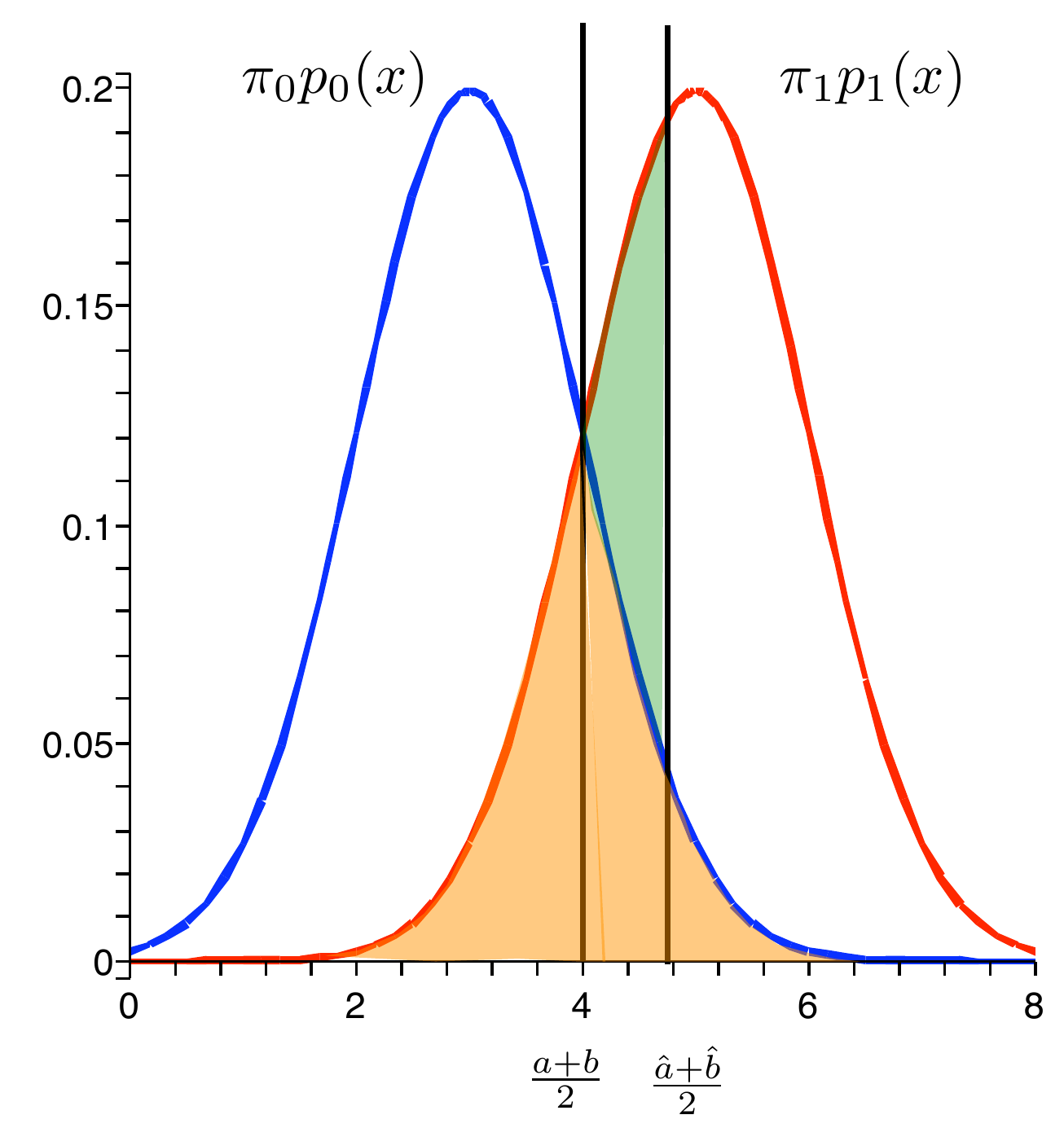}
\caption{Likelihood functions for two normal distributions with means $a,b$. The Bayes risk is the area of the orange triangle. The excess risk is the area of the 
green triangle}
\end{center}
\end{figure}
From this example we see that the rate is determined by the behaviour of the regression function $\eta$ around $1/2$, namely in this case
$$
\mathbb{P}(|\eta(x)-1/2| \leq t) =O (t) , \quad t\geq 0
$$
which is called the {\it margin condition}. Roughly speaking, in a parametric model satisfying the margin condition, the excess risk goes to zero as 
$O\left(n^{-1}\right)$. In non-parametric models (which are the main focus of learning theory), arbitrarily slow rates are possible depending on the complexity of the model and the behaviour of the regression function \cite{AudibertTsybakov07}.

According to Vapnik \cite{Vapnik98}, one of the principles of statistical learning is:
``when solving a problem of interest, do not solve a more general problem as an intermediate step.'' This is interpreted as saying that learning procedures which estimate first the statistical model (or regression function) and then plug this estimate into the Bayes classifier, are less efficient than methods which aim at constructing $\hat{h}(x)$ directly. Recently it has been shown \cite{AudibertTsybakov07} that this is not necessarily the case if some type of margin condition is assumed, and that  plug-in estimators
\begin{equation}
 \label{eqn:plug-in}
\hat{h}_{\textsc{plug-in}}(x) = 1_{\hat\eta(x) \geq 1/2}.
\end{equation}
can perform close to, or at `fast $n^{-1}$ rates'. In this paper we show that at least in what concerns the {\it constant} in front of the rate, 
{\it direct quantum learning performs better than plug in methods based on optimal  state estimation}. This is a purely quantum phenomenon which stems from 
the incompatibility between the optimal measurements for estimation and learning.

\subsection{Quantum Learning}

We now consider the quantum counterpart of the learning problem, the classification of quantum states. In this case, $\mathcal{X}$ is replaced by a Hilbert space of dimension $d$. To find the counterpart of $\mathbb{P}$ we write  
$\mathbb{P}(dx,y)=\mathbb{P}(dx|y)\mathbb{P}(y)$ and replace the conditional distributions $\mathbb{P}(dx|y=0)$ 
and $\mathbb{P}(dx|y=1)$ by density matrices $\rho$ and $\sigma$, 
while $\mathbb{P}(y)$ describes prior probabilities over the states, usually denoted
by $\pi_y := \mathbb{P}(Y=y)$. There is no direct counterpart of the object $x$, since the quantum state is \emph{identified} with its description in terms of 
a density matrix; however, one can think of $x$ as a set of values obtained by
measuring the state $\rho$.

\begin{table}
\caption{Comparison of classical and quantum learning.}
\begin{center}
\label{fig:condition}
\begin{small}
\begin{tabular}{l@{}cc}
\hline
element & classical learning & quantum learning \\
\hline \\[-3mm] 
distribution & $\mathbb{P}$ & $(\rho,\sigma)$ with priors $(\pi_0,\pi_1)$ \\
training example & $(x,y)$ & $(\rho,0)$ or $(\sigma,1)$ \\
training set & $\{(x_1,y_1),\ldots,(x_n,y_n)\}$ & $\rho^{\otimes n_0}\otimes \sigma^{\otimes n_1}$ \\
function & classifier $\hat{h}$ & measurement $\widehat{P}$ \\
optimal function & $h^*(x) = 1_{\eta(x) \geq 1/2}$ & $P^* = [\pi_{0}\rho-\pi_{1}\sigma_1]_{+}$ \\
minimum risk & $\frac{1}{2} \left(1 - \mathbb{E}[|1 - 2\eta(X)|]\right)$ &
 $\frac{1}{2}\left(1 - {\rm Tr}[|\pi_1 \sigma - \pi_0 \rho|]  \right)$ \\
risk & $\mathbb{P}(\hat{h}(X) \neq Y) - \mathbb{P}^*_e$ & 
$\mathbb{E}{\rm Tr}\left[(\pi_1 \sigma - \pi_0 \rho) (\widehat{P} - P^*) \right]$ \\
\hline
\end{tabular}
\end{small}
\end{center}
\end{table}

The training set consists of $n$ i.i.d. pairs
$\{(\tau_1,Y_1),\ldots, (\tau_n,Y_n)\}$, where $\tau_i = \rho$ if $Y_i=0$ and
$\tau_i = \sigma$ if $Y_i = 1$. Thus we are randomly given copies of $\rho$ and 
$\sigma$ together with their labels, but {\it we  do not know} what $\rho$ and 
$\sigma$ are. After a permutation the joint state of the training set can be concisely written as $\rho^{\otimes n_0} \otimes \sigma^{\otimes n_1}$, where $n_y$ is the number of copies for which $Y_{j}=y$.

The experimenter is allowed to make any physical operations on the training set (such as unitary evolution or measurements) and outputs a binary-valued measurement $\mathbb{C}^{2}$  with POVM elements $\widehat{M}_{n}:=(\widehat{P}_{n},\mathbf{1}-\widehat{P}_{n})$. This (random) POVM plays the role of the classical classifier $\hat{h}_{n}$: given a new copy of the quantum state whose label is unknown, we apply the measurement $\widehat{M}_{n}$ to guess whether the state is 
$\rho$ or $\sigma$. The accuracy is measured in terms of the expected misclassification error:
\begin{equation*}
\mathbb{P}_e(\widehat{M}_{n}) =\mathbb{E}\left[ \pi_0 {\rm Tr}[\rho (\mathbf{1} - \widehat{P}_{n})] + \pi_1 {\rm Tr}[\sigma \widehat{P}_{n}] \right] 
\end{equation*}
where the expectation is taken over the outcomes $\widehat{P}_{n}$. 

The Bayes classifier $M^{*}$ is nothing but the Helstrom measurement 
\cite{Helstrom} which optimally discriminates between {\it known} states $\rho,\sigma$ with priors $\pi_{0},\pi_{1}$. In this case $M^{*}= (P^{*},\mathbf{1}-P^{*})$ where $P^*$ is the projection onto the subspace of positive
eigenvalues of the operator $\pi_0 \rho - \pi_1 \sigma$, i.e. $P^* = [\pi_{0}\rho-\pi_{1}\sigma]_{+}$. Note that if both eigenvalues are of the same sign, the
optimal procedure is to choose the state with higher $\pi_i$ without making any measurement at all. The \emph{Helstrom risk} can be expressed as: 
$\mathbb{P}^*_e = \frac{1}{2}\left(1 - {\rm Tr}[|\pi_1 \sigma - \pi_0 \rho|]  \right).$ which is the quantum extension of \eqref{bayes.risk.classical}.

As before, the performance of an arbitrary classifier $\widehat{M}_{n}$ is measured 
by the excess risk:
\begin{equation}
R(\widehat{M}_{n}) = \mathbb{P}_e(\widehat{M}_{n}) - \mathbb{P}^*_e  
= \mathbb{E}{\rm Tr}\left[(\pi_1 \sigma - \pi_0 \rho) (\widehat{P}_{n} - P^*) \right],
\label{eq.quantum_excess}
\end{equation}
which is expected to vanish asymptotically with $n$.

In Table \ref{fig:condition} we summarise the analogous concepts in the classical and the quantum learning set-up. Besides these obvious correspondences we would like to point out some interesting differences. Based on the coin toss example \ref{ex.coin.toss} one may expect that the classification of two qubit states should exhibit similar exponentially fast rates. In fact as we will show in this paper, 
the rate is $n^{-1}$ as in example \ref{ex.gaussian.learning} where the data is not discrete but continuous and the regression function is not bounded away from 
$1/2$. 
A possible explanation is the fact that in the quantum case the `data' to be labelled is a quantum system and the distribution of the outcome depends on the measurement. A helpful way to think about it is illustrated in Figure \ref{fig.controlled.learning}. The unknown label is the input of a black box which outputs the data $X$ with conditional distribution 
$\mathbb{P}(X|Y)$. In the quantum case the box has an additional input, the measurement choice which appears as a parameter in the conditional distribution and is controlled by the experimenter. The game is to learn from the training set the optimal value of this parameter, for which the identification of the label $Y$ is most facile. This set-up resembles that of {\it active learning} \cite{Cohn} where the training data $X_{i}$ are actively chosen rather than collected randomly.

\begin{figure}[ht!]
\begin{center}
\label{fig.controlled.learning}
\includegraphics[width=7cm]{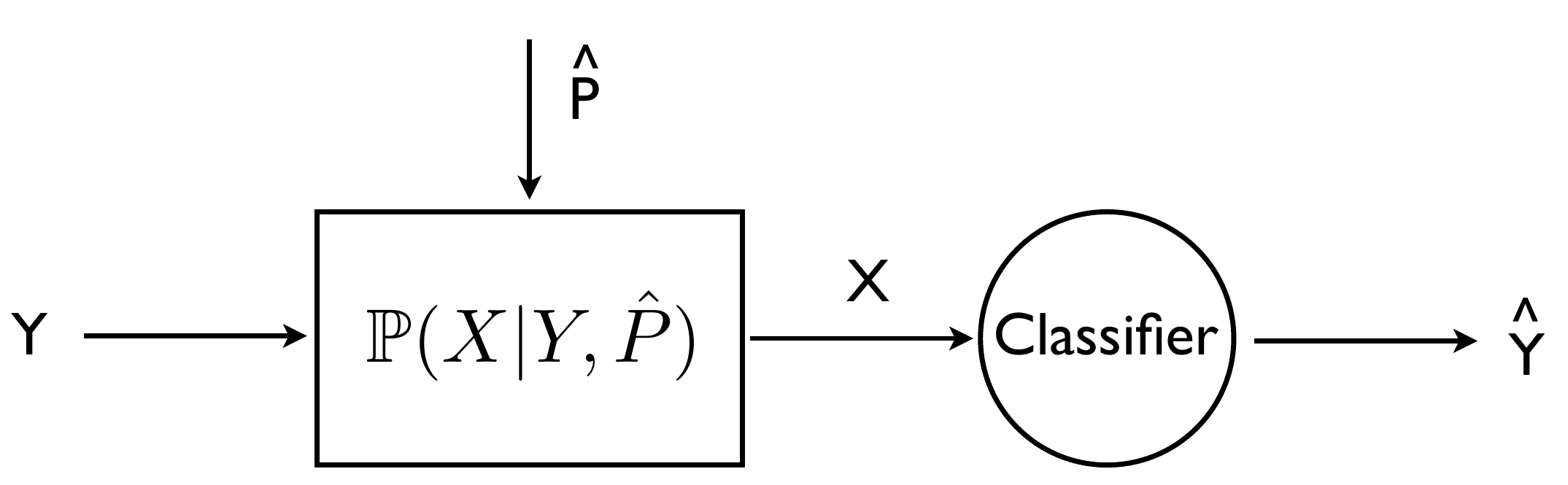}
\end{center}
\caption{Quantum learning seen as classical learning with data distribution depending on on additional parameter controlled by the experimenter}
\end{figure}

\subsection{Local minimax formulation of optimality}
\label{sec.local.minimax.}
We now give the precise formulation of what we mean by asymptotic optimality of a learning strategy $\{\widehat{M}_{n}: n\in \mathbb{N}\}$. As in the classical case we construct a model which contains all unknown parameters of the problem: the two states $\rho, \sigma$ and the prior $\pi_{0}$. We denote these parameters collectively by $\theta$ which belongs to a parameter space 
$\Theta\subset\mathbb{R}^{k}$. When some prior information is available about the model, it can be included by restricting to a sub-model of the general one.
As in the classical case we denote by $R_{\theta}(\widehat{M}_{n})$, the risk of $\widehat{M}_{n}$ at $\theta$, and we can define the maximum risk as in \eqref{eq.max.risk}.
However, assuming for the moment that that the optimal rate of classification is $n^{-1}$, we use a more refined performance measure which is the local version of the maximum risk $R_{max}$ around a fixed parameter $\theta_{0}$
\begin{equation}\label{eq.local.maximum.risk}
R^{(l)}_{max}(\widehat{M}_{n}~;~\theta_{0}):= 
\sup_{\|\theta-\theta_{0}\|\leq n^{-1/2+\epsilon}} n R_{\theta}(\widehat{M}_{n})
\end{equation}
where $\epsilon>0$ is a small number. Note that in the above definition the usual risk was multiplied by the inverse of its rate $n$ so that we can expect 
$R^{(l)}_{max}$ to have a non-trivial limit when $n\to \infty$. The reason for choosing the local maximum risk is that it reflects better the difficulty of the problem in different regions of the parameter space while the maximum risk captures the worst possible behavior over the whole parameter space. 
We can think of the local ball $\|\theta-\theta_{0}\|\leq n^{-1/2+\epsilon}$ as the intrinsic parameter space when the training set consists of $n$ samples. Indeed a simple estimator $\theta_{0}$ on a small proportion $\tilde{n}=n^{1-\epsilon}$ of the sample locates the true parameter in such a ball with high probability (see Lemma 2.1 in \cite{Guta&Janssens&Kahn}).

\begin{definition}\label{def.local.minmax}
The local minimax risk at $\theta_{0}$ is defined as
$$
R^{(l)}_{minmax}(\theta_{0}):= 
\limsup_{n\to\infty} \inf_{\widehat{M}_{n}} R^{(l)}_{max}(\widehat{M}_{n}~;~\theta_{0}).
$$
A sequence of classifiers $\{\tilde{M}_{n}: n\in \mathbb{N}\}$ is called locally asymptotic minimax if 
$$
\limsup_{n\to\infty} R^{(l)}_{max}(\tilde{M}_{n}~;~\theta_{0}) = 
R^{(l)}_{minmax}(\theta_{0}).
$$
\end{definition}

We identify two general learning strategies. The first one consists in estimating the states $\rho,\sigma$ and prior $\pi_0$ (optimally) to get $\hat{\rho},\hat{\sigma},\hat\pi_0$ and then constructing the classifier (measurement)
as:
\begin{equation}\label{eq.plug.in.quantum}
\widehat{P}_{\textsc{plug-in}} = [\hat\pi_0\hat{\rho}-\hat\pi_1\hat{\sigma}]_{+}.
\end{equation}
The second strategy aims at estimating the Helstrom projection $P^{*}$ directly from the training set without passing through state estimation. As we will see, it turns out that in general the latter performs better than the former.

In section \ref{sec.lan} we review the concept of local asymptotic normality which means that locally, the training set can be efficiently approximated by a simple Gaussian model consisting of displaced thermal equlibrium states and classical Gaussian random variables. In section \ref{sec.local.formulation} we show how to reduce the local classification risk for qubits to an expectation of a quadratic form in the local parameters. This will simplify the problem of finding the optimal measurement of the training set, to that of finding the optimal measurement of a Gaussian state for a quadratic loss function \cite{Holevo}.

\section{Local asymptotic normality}\label{sec.lan}

In a series of papers 
\cite{Guta&Kahn,Guta&Janssens&Kahn,Guta&Kahn2} Gu\c{t}\u{a} and Kahn and 
Gu\c{t}\u{a} and Jencova \cite{Guta&Jencova} developed a new approach to state estimation based on the extension of the classical statistical concept of local asymptotic normality \cite{LeCam}. Using this tool one can cast the problem of (asymptotically) optimal state estimation into a much simpler one of estimating the mean of a Gaussian state with known variance. 

Local asymptotic normality provides a convenient description of quantum statistical models involving i.i.d. quantum states which can also be applied to the present learning problem. In this section we will give a brief introduction to this subject in as much as it is necessary for this paper and we refer to \cite{Guta&Kahn2} for proofs and a more in depth analysis.

\subsection{Local asymptotic normality in classical statistics}

A typical statistical problem is the estimation of some unknown parameter 
$\theta$ from a sample $X_{1}, \dots , X_{n}\in \mathcal{X}$ of independent, identically distributed random variables drawn from a distribution $\mathbb{P}_{\theta}$ over a measure space $(\mathcal{X},\Sigma)$. 
If $\theta$ belongs to an open subset of 
$\mathbb{R}^{k}$ for some finite dimension $k$ and if the map 
$\theta\to \mathbb{P}_{\theta}$ is sufficiently smooth, then widely used estimators 
$\hat{\theta}_{n}(X_{1},\dots, X_{n})$ such as the maximum likelihood  are 
asymptotically optimal in the sense that they converge to $\theta$ at a rate 
$n^{-1/2}$ and the error has an asymptotically normal distribution 
\begin{equation}\label{eq.normality}
\sqrt{n}(\hat{\theta}_{n}-\theta)\overset{\mathcal{L}}{\longrightarrow} 
N(0, I^{-1}(\theta)),
\end{equation}
where the right side is the lower bound set by the Cram\'{e}r-Rao inequality for unbiased estimators. To give a simple example, if $X_{i}\in \{0,1\}$ is the result of a coin toss with $\mathbb{P}[X_{i}=1]=\theta$ and $\mathbb{P}[X_{i}=0]=1-\theta$ 
then the sufficient statistic
$$
\hat{\theta}_{n} = \frac{1}{n} \sum_{i=1}^{n} X_{i} 
$$
satisfies \eqref{eq.normality} by the Central Limit  Theorem (CLT). 

Naturally, the first inquiries into quantum statistics concentrated on 
generalising the Cram\'{e}r-Rao inequality to unbiased measurements, and on 
finding asymptotically optimal estimators which achieve the 
quantum version of the Fisher information matrix \cite{Helstrom,Holevo,Belavkin}. However it was found that due 
to the additional uncertainty introduced by the non-commutative nature of quantum mechanics the situation is essentially different from the classical case. 
A summary of these finding is
\begin{enumerate}
\item
the multi-dimensional version of the Cram\'{e}r-Rao bound is in general not achievable;
\item 
the optimal measurement depends on the loss function, i.e. the quadratic form 
$(\hat{\theta}-\theta)^{t}G(\hat{\theta}-\theta)$ and different weight matrices $G$ lead in general to incompatible measurements.
\end{enumerate}

As we will see, these issues can be overcome by adopting a more modern perspective to asymptotic statistics provided by the technique of local asymptotic normality \cite{LeCam,vanderVaart}. Instead of analysing particular estimation problems, the idea is to consider the structure of the statistical model underlying the data and to approximate it by a simpler model for which the statistical problems are 
easy to solve. In order to obtain a non-trivial limit model 
it makes sense to rescale the parameters according to their uncertainty, so we assume that $\theta$ is localised in a region of size $n^{-1/2}$ and we can write 
$\theta=\theta_{0}+ h/\sqrt{n}$ with $\theta_{0}$ known and $h\in  \mathbb{R}^{k}$ the local parameter to be estimated. Such an assumption does not restrict the generality of the problem since one can use an adaptive two-steps procedure where a rough estimate $\theta_{0}$ is obtained in the first step using a small part of the sample, and the rest is used for the accurate estimation of the local parameter $h$.

%



%

Local asymptotic normality means that  the sequence of (local) statistical models 
\begin{equation}\label{eq.seq.local.model}
\mathcal{P}_{n}:=\left\{ \mathbb{P}^{n}_{\theta_{0}+h/\sqrt{n}} : \|h\|< C\right\}, \qquad n\in \mathbb{N}
\end{equation}
depending `smoothly' on $h$, converges to the {\it Gaussian shift model} 
\begin{equation}\label{eq.Gaussian.shift}
\mathcal{G}:=\left\{ N( h, I^{-1}(\theta_{0}) ): \|h\|< C\right\}
\end{equation}
where we observe a single Gaussian variable with mean $h$ and  fixed and known variance. The convergence has a precise mathematical definition in terms of the 
Le Cam distance between two statistical models which quantifies the extent to which each model can be `simulated' by randomising data from the other. 
\begin{definition}
A positive linear map 
$$
T:L^{1}(\mathcal{X},\mathcal{A},\mathbb{P}) \to 
L^{1}(\mathcal{Y},\mathcal{B},\mathbb{Q})  
$$
is called a stochastic operator (or randomisation) 
if $\|T(p)\|_{1}= \|p\|_{1}$ for every $p\in L^{1}_{+}(\mathcal{X})$.
\end{definition}
 For simplicity we consider only dominated models for which all distributions have densities with respect to some fixed reference distribution. In this case a randomisation is the classical analogue of a quantum channel.
\begin{definition}\label{def.LeCam.distance}
Let $\mathcal{P} := \{ \mathbb{P}_{\theta}: \theta\in \Theta\}$ and 
$\mathcal{Q}:= \{\mathbb{Q}_{\theta}: \theta\in \Theta\} $ be two dominated statistical models with distributions having probability densities 
$p_{\theta}:=d\mathbb{P}_{\theta}/d\mathbb{P}$ and 
$q_{\theta}:=d\mathbb{Q}_{\theta}/d\mathbb{Q}$.
The deficiencies 
$\delta(\mathcal{P},\mathcal{Q}) $ 
and 
$\delta(\mathcal{Q},\mathcal{P})$ 
are defined as
\begin{align*}
\delta(\mathcal{P},\mathcal{Q}) &:= 
\inf_{T} \sup_{\theta\in \Theta}\| T(p_{\theta}) -q_{\theta} \|_{1}\\
\delta(\mathcal{Q},\mathcal{P}) &:=  
\inf_{S} \sup_{\theta\in \Theta}
\| S(q_{\theta}) -p_{\theta}\|_{1} 
\end{align*}
where the infimum is taken over all randomisations $T,S$. The Le Cam distance between $\mathcal{P}$ and $\mathcal{Q}$ is 
$$
\Delta(\mathcal{P},\mathcal{Q}):= 
{\rm max}(\delta(\mathcal{Q},\mathcal{P}) ,\, \delta(\mathcal{P} ,\mathcal{Q})).
$$
\end{definition}

With this definitions the local asymptotic normality for i.i.d. parametric models can 
be formulated as 

\begin{theorem}
The sequence of local models \eqref{eq.seq.local.model} converges in the Le Cam distance to the Gaussian shift model \eqref{eq.Gaussian.shift}
$$
\lim_{n\to \infty}\Delta(\mathcal{P}_{n},\mathcal{G})=0.
$$
\end{theorem}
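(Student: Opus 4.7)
My approach would be to exhibit explicit randomisations witnessing both deficiencies vanishing, all built around the classical LAN expansion of the log-likelihood process. First I would impose a quadratic mean differentiability hypothesis on $\theta\mapsto \sqrt{p_\theta}$ and derive the pointwise expansion
$$
\log\frac{d\mathbb{P}^{n}_{\theta_0+h/\sqrt n}}{d\mathbb{P}^{n}_{\theta_0}}(X_1,\dots,X_n)= h^t\Delta_n - \tfrac12 h^t I(\theta_0) h + o_{\mathbb{P}_{\theta_0}^n}(1),
$$
where $\Delta_n:=n^{-1/2}\sum_{i=1}^n\dot\ell_{\theta_0}(X_i)$ is the rescaled score. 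By the classical CLT, $\Delta_n\to N(0,I(\theta_0))$ under $\mathbb{P}_{\theta_0}^n$; combined with Le Cam's third lemma this upgrades to $\Delta_n\to N(I(\theta_0)h,I(\theta_0))$ under $\mathbb{P}^{n}_{\theta_0+h/\sqrt n}$. A Berry--Esseen-type estimate would then be used to turn weak convergence into total-variation convergence with an error bound uniform in $\|h\|\leq C$.

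For the forward randomisation $T_n:\mathcal{P}_n\to\mathcal{G}$ I would simply map $(X_1,\dots,X_n)$ to $Z_n:=I(\theta_0)^{-1}\Delta_n$, convolved with a vanishing Gaussian mollifier if needed to wash out any atoms. The preceding step then shows $\|T_n(\mathbb{P}^{n}_{\theta_0+h/\sqrt n})-N(h,I^{-1}(\theta_0))\|_1\to 0$ uniformly on the local ball, so $\delta(\mathcal{P}_n,\mathcal{G})\to 0$. The converse randomisation $S_n:\mathcal{G}\to \mathcal{P}_n$ is the delicate direction: given $Z\sim N(h,I^{-1}(\theta_0))$, I would draw an independent `pilot' sample from $\mathbb{P}_{\theta_0}^n$ and then apply a Markov kernel of importance-reweighting (or acceptance/rejection) type whose weights involve $\exp(h^t\Delta_n - \tfrac12 h^t I(\theta_0) h)$ with $h$ replaced by the function $I(\theta_0)Z$ of the observation. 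The key structural fact is that LAN implies asymptotic sufficiency of $\Delta_n$, so conditioning on this statistic decouples the remaining randomness from the local parameter; reconstructing $\mathbb{P}^{n}_{\theta_0+h/\sqrt n}$ from $Z$ alone then becomes possible up to an $o(1)$ total-variation error.

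The main technical obstacle lies in this backward step: producing a product distribution on $\mathcal{X}^n$ from a single Gaussian observation requires not just the pointwise quadratic expansion but also uniform control of the Taylor remainder and of the Gaussian approximation over the ball $\|h\|\leq C$, together with a quantitative version of asymptotic sufficiency of $\Delta_n$. All the ingredients are standard in modern asymptotic statistics (a textbook treatment is in van der Vaart \cite{vanderVaart}), but packaging them into uniform total-variation bounds is essentially where the bulk of Le Cam's original argument is spent, and it is this uniformity in $h$ that must survive translation to the quantum setting in the sections that follow.
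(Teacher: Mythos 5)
The paper does not prove this theorem: it is quoted as a classical result of Le Cam and the reader is referred to \cite{LeCam,vanderVaart}, so there is no internal argument to compare yours against. Judged on its own terms, your outline follows the standard route (quadratic mean differentiability, the LAN expansion of the log-likelihood, CLT for the score, mollification to upgrade weak convergence to total variation) and the forward randomisation $T_n$ is essentially correct. You are also right that the backward direction is where the work lies, and that it must go through (asymptotic) sufficiency of $\Delta_n$.

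There is, however, a genuine gap in the backward kernel as you describe it. An importance-reweighting or acceptance/rejection scheme with weights $\exp\bigl(h^t\Delta_n-\tfrac12 h^tI(\theta_0)h\bigr)$ in which $h$ is replaced by a function of $Z$ is not a Markov kernel: integrating such weights over the pilot sample leaves a normalising factor that still depends on the unknown $h$ (e.g.\ $\mathbb{E}_Z\bigl[\exp(Z^t\Delta_n-\tfrac12\Delta_n^tI^{-1}\Delta_n)\bigr]=\exp(h^t\Delta_n)$, which misses the target likelihood ratio by $\exp(-\tfrac12 h^tI h)$), so the construction does not reproduce $\mathbb{P}^n_{\theta_0+h/\sqrt n}$ and cannot be repaired by tuning the weights. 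The correct construction is different: one uses the \emph{parameter-free} conditional distribution $\mu_n(\cdot\mid \Delta_n=t)$ of the sample given the approximately sufficient statistic under $\mathbb{P}^n_{\theta_0}$, and defines $S_n(Z)$ by sampling from $\mu_n(\cdot\mid \Delta_n=I(\theta_0)Z)$ (after mollifying $Z$); the error is then controlled by (i) uniform asymptotic sufficiency of $\Delta_n$ over $\|h\|\le C$ and (ii) a total-variation local limit theorem identifying $\mathcal{L}(\Delta_n\mid\mathbb{P}^n_{\theta_0+h/\sqrt n})$ with $N(I(\theta_0)h,I(\theta_0))$ uniformly in $h$. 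Point (ii) is also where your hypotheses are too weak: differentiability in quadratic mean alone yields only \emph{weak} convergence of experiments, whereas convergence in the Le Cam distance requires extra regularity (e.g.\ absolute continuity or non-latticeness of the score) for the total-variation limit theorem to hold. Since the quantum extension in Section \ref{sec.qlan} is built on the strong (trace-norm) version of this convergence, these are not cosmetic omissions.
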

This statement can be extended to slowly increasing local neighbourhoods 
$\|h\|\leq n^{\epsilon}$ with precise convergence rate for the 
Le Cam distance.  

\subsection{Local asymptotic normality in quantum statistics}
\label{sec.qlan}

We will now describe the quantum version of local asymptotic normality for the simplest case of a family of spin states. The general result valid for arbitrary finite dimensional systems can be found in \cite{Guta&Kahn2}.

We are given $n$ spins independent identically prepared in the state 
$$
\rho_{\vec{r}} = \frac{1}{2}(\mathbf{1} + \vec{r}\vec{\sigma})
$$ 
where $\vec{r}$ is the unknown Bloch vector of the state and $\vec{\sigma} = (\sigma_{x}, \sigma_{y}, \sigma_{z})$ are the Pauli matrices in $M(\mathbb{C}^{2})$. Following the methodology of the previous section, we concentrate on the structure of the statistical model itself rather than optimal state estimation. 
The latter, and other statistical problems can be solved easily once the convergence to a Gaussian model is established.

By measuring a small proportion $n^{1-\epsilon}\ll n$ of the systems we can devise 
 an initial rough estimator $\rho_{0}:=\rho_{\vec{r}_{0}}$ so that with high probability the state is in a ball of size $n^{-1/2+\epsilon}$ around  $\rho_{0}$ \cite{Guta&Kahn}. We label the states in this ball by the local parameter $\vec{u}$
$$
\rho_{\vec{u}/\sqrt{n}} = 
\frac{1}{2} \left(
\mathbf{1} + (\vec{r}_{0} + \vec{u}/\sqrt{n})\vec{\sigma}\right)
$$
and define the local statistical model by
\begin{equation}\label{eq.q.n}
\mathcal{Q}_{n}:= \left\{\rho^{n}_{\vec{u}} : \| \vec{u}\|\leq n^{\epsilon} \right\} ,\qquad \rho^{n}_{\vec{u}} :=\rho^{\otimes n}_{\vec{u}/\sqrt{n}}. 
\end{equation}

By choosing a  coordinate system $(\vec{a}_{1},\vec{a}_{2},\vec{a}_{3})$ with 
$\vec{a}_{3}$ along  $\vec{r}_{0}$  and writing 
$\vec{u}= u_{1}\vec{a}_{1}+u_{2}\vec{a}_{2}+u_{3}\vec{a}_{3}$ we observe that  
$\rho_{\vec{u}/\sqrt{n}}$ is essentially obtained by perturbing the eigenvalues of 
$\rho_{0}$ by $u_{3}/2\sqrt{n}$ and rotating it with a `small' unitary
$$
U:= \exp(i(- u_{2} \vec{a}_{1}+u_{1}\vec{a}_{2})\vec{\sigma}/2r_{0}\sqrt{n}),\qquad
r_{0}:=\|\vec{r}_{0}\|.
$$ 

The splitting into `classical' and `quantum' parameters $u_{3}$ and $(u_{1},u_{2})$ can be intuitively explained through the `big Bloch sphere' picture commonly used to describe spin coherent \cite{Radcliffe} and spin squeezed states \cite{Kitagawa&Ueda}. 
Let
$$
L_{j}:= \sum_{i=1}^{n}  \vec{a}_{j} \cdot \vec{\sigma}^{(i)}  ,\qquad j=1,2,3
$$
be the collective spin components along the directions $\vec{a}_{j}$. 
By the Central Limit Theorem, the distributions of $L_{i}$ with respect to 
$\rho_{0}^{\otimes n}$ converge as
$$
\frac{1}{\sqrt{n}}(L_{3} - n r_{0}) \overset{\mathcal{D}}{\longrightarrow}
N(0, 1-r_{0}^{2}),\qquad
\frac{1}{\sqrt{n}}L_{1,2}  \overset{\mathcal{D}}{\longrightarrow}  N(0, 1), 
$$
so that the joint spins state can be pictured as a vector of length $nr_{0}$ 
whose tip has a Gaussian blob of size $\sqrt{n}$ 
representing the uncertainty in the collective variables (see Figure \ref{fig.big.ball}). Furthermore, by a law of large numbers heuristic we estimate the commutators 
$$
\left[\frac{1}{\sqrt{n}} L_{1}, \frac{1}{\sqrt{n}} L_{2}\right]=
2i \frac{1}{n} L_{3} \approx
 2i r_{0} \mathbf{1},\qquad
\left[\frac{1}{\sqrt{n}} L_{1,2}, \frac{1}{\sqrt{n}} L_{3}\right] 
\approx 0.
$$
\begin{figure}
\label{fig.big.ball}
\begin{center}
\includegraphics[width=6cm]{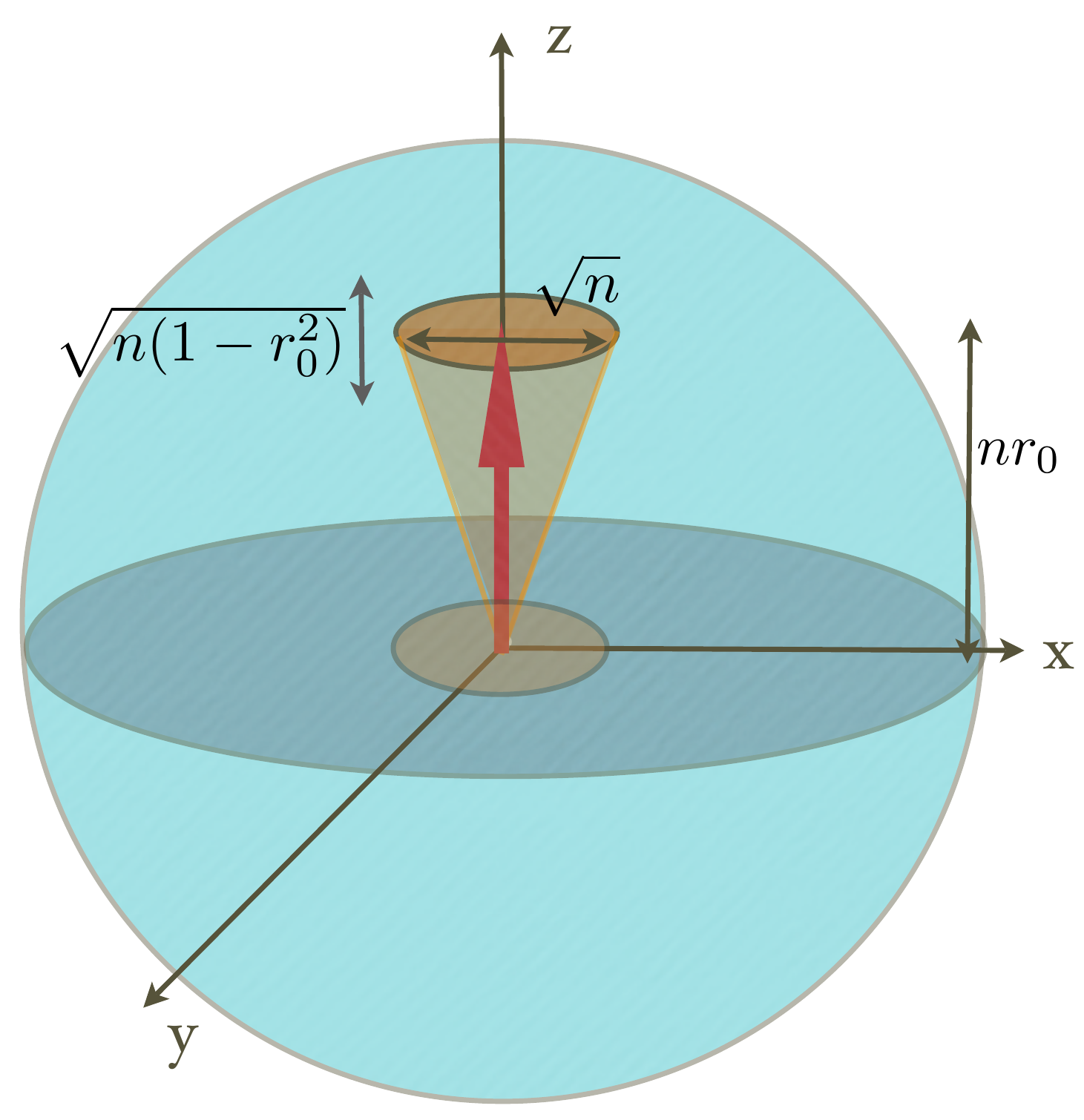}
\caption{Big ball picture of the collective state of identical mixed spins. The total spin  is represented as a vector of length $nr_{0}$ with a 3D uncertainty blob of size 
$\sqrt{n}$ in the $x,y$ directions and $\sqrt{n(1-r_{0}^{2})}$ in the $z$ direction.}
\end{center}
\end{figure}

This suggests that $ L_{1}/\sqrt{2r_{0}n}$ and 
$L_{2}/\sqrt{2r_{0}n}$ converge to the canonical coordinates $Q$  and $P$ of a quantum harmonic oscillator in a thermal equilibrium state 
$$
\Phi:= (1-p) \sum_{k=0}^{\infty} p^{k} | k\rangle\langle k|,\qquad 
p= \frac{1-r_{0}}{1+r_{0}},
$$
where $\{|k\rangle :k\geq 0\}$ represents the Fock basis. 
Moreover  the (rescaled) component $\frac{1}{\sqrt{n}}(L_{3} - n r_{0})$ converges to a {\it classical} Gaussian variable $X\sim N:= N(0,1-r_{0}^{2})$ which is independent of the quantum state. Note that the Gaussian limit state has both quantum and classical components and should be identified with the state 
$\Phi\otimes N $ on the von Neumann algebra $\mathcal{B}(\ell^{2}(\mathbb{N})) \otimes L^{\infty}(\mathbb{R})$.

What is the Gaussian state when the spins are in the `perturbed' state 
$\rho^{n}_{\vec{u}}$ ? By applying the same argument we obtain that the variables $Q,P,X$ pick up expectations which (in the first order in $n^{-1/2}$) are proportional to the local parameters $(u_{1},u_{2},u_{3})$ while the variances remain unchanged. More precisely the oscillator is in a 
displaced thermal equilibrium state 
$
\Phi_{\vec{u}} := D(\vec{u}) \Phi D(\vec{u})^{*} ,
$
where $D(\vec{u})$ is the displacement operator 
$$
D(\vec{u}):=\exp \left( i (- u_{2} Q+u_{1}P )/\sqrt{2r_{0}} \right),
$$ 
and the classical bit has distribution $N_{\vec{u}}:=N(u_{3}, 1-r_{0}^{2})$.

\begin{definition}\label{def.quantum.gaussian.shift}
The quantum Gaussian shift model $\mathcal{G}$ is defined by the family of quantum-classical states
\begin{equation}\label{eq.q.Gaussian.shift}
\mathcal{G}:= \{ \Phi_{\vec{u}}\otimes N_{\vec{u}} : \vec{u}\in\mathbb{R}^{3} \}
\end{equation}
on $\mathcal{B}(\ell^{2}(\mathbb{N})) \otimes L^{\infty}(\mathbb{R})$.
\end{definition}

Having defined the sequence of local models $\mathcal{Q}_{n}$ and the Gaussian shift model, we need to define the quantum counterparts of randomisations 
and convergence of models. The natural analogue of a classical randomisation is a quantum channel, i.e. completely positive, trace preserving map 
$
C:\mathcal{T}_{1}(\mathcal{H})\to \mathcal{T}_{1}(\mathcal{K})
$
where $\mathcal{T}_{1}(\mathcal{H})$ represents the trace class operators on $\mathcal{H}$. However, as we saw above, a sequence of quantum statistical 
models may converge to a quantum-classical one. The mathematical framework covering randomisations of both classical and quantum statistical models is that of von Neuman algebras and channels between their preduals. In finite dimensions this simply means that we deal with channels between block diagonal matrix algebras. We can now define the Le Cam distance between two quantum models in the same way as in definition \ref{def.LeCam.distance} with classical randomisation replaced by quantum ones and the $\|\cdot\|_{1}$ representing the norm on the predual, which is the trace norm in the case of density matrices.

 \begin{theorem}\label{th.qlan.qubits}
 Let $\mathcal{Q}_{n}$ be the sequence of statistical models \eqref{eq.q.n} for $n$ i.i.d. local spin states.
and let $\mathcal{G}_{n}$ be the restriction of the Gaussian shift model 
\eqref{eq.q.Gaussian.shift} to the range of parameters $\|\vec{u}\|\leq n^{\epsilon}$.
Then
$$
\lim_{n\to\infty}\Delta(\mathcal{Q}_{n}, \mathcal{G}_{n}) =0,
$$  
i.e. there exist sequences of channels $T_{n}$ and $S_{n}$ such that

\begin{equation}\label{eq.channel.conv.}
\begin{split}
\lim_{n\to \infty}\, 
\sup_{\| {\bf u}\|\leq n^{\epsilon}} 
\| \Phi_{ \vec{u}}\otimes N_{\vec{u}}  - 
T_{n} \left(  \rho_{ \vec{u}}^{n}\right)  \|_{1} =0, \\
\lim_{n\to \infty} \,
\sup_{\| {\bf u}\|\leq n^{\epsilon}} \| \rho_{\vec{u}}^{n} - S_{n} \left(  \Phi_{ \vec{u}}\otimes N_{\vec{u}} \right)  \|_{1} =0. \\
\end{split}
\end{equation}
\end{theorem}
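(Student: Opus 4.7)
The plan is to construct the channels $T_n$ and $S_n$ explicitly via the Schur--Weyl decomposition of $(\mathbb{C}^2)^{\otimes n}$, and then verify the convergence \eqref{eq.channel.conv.} by combining a classical central limit theorem for the block label with a Holstein--Primakoff-type limit inside each block. Concretely, write
$$
(\mathbb{C}^2)^{\otimes n} \;\cong\; \bigoplus_{j} \mathcal{H}_j \otimes \mathcal{K}_j,
$$
where $j$ runs over total-spin values, $\mathcal{H}_j$ carries the spin-$j$ irrep of $SU(2)$, and $\mathcal{K}_j$ is the multiplicity space. Since $\rho_{\vec u/\sqrt n}^{\otimes n}$ is permutation invariant, it decomposes as $\sum_j p_n(j\mid \vec u)\, \rho_j^{\vec u}\otimes \tau_j/\dim\mathcal{K}_j$, with $\rho_j^{\vec u}$ acting on $\mathcal{H}_j$ and the classical block distribution $p_n(\cdot\mid\vec u)$ carrying precisely the information about $u_3$. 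This is the structural reason the limit splits as a quantum oscillator (from rotations in the $u_1,u_2$ directions) times a classical Gaussian (from eigenvalue fluctuations in the $u_3$ direction).

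To build $T_n$, first apply the Schur--Weyl isometry, discard the multiplicity register $\mathcal{K}_j$, read out $j$ classically and output $X_n := (2j - n r_0)/\sqrt{n}$; then, conditionally on $j$, apply an isometry $V_j:\mathcal{H}_j\hookrightarrow \ell^2(\mathbb N)$ sending the highest-weight vector $|j,j\rangle$ to the Fock vacuum and $|j,j-k\rangle$ to $|k\rangle$. Under $\rho_0^{\otimes n}$, the distribution of $j$ is (after centering and rescaling by $\sqrt n$) asymptotically $N(0,1-r_0^2)$ by the CLT applied to $L_3$, while inside a typical block with $j\approx nr_0/2$, the diagonal weights are geometric with ratio $p=(1-r_0)/(1+r_0)$, so $V_j(\rho_j^{\vec 0})V_j^*$ converges in trace norm to the thermal state $\Phi$. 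The small unitary rotation $U$ acts on $\mathcal{H}_j$ through the spin-$j$ representation; since $\|V_j L_\pm V_j^* - \sqrt{2jr_0}\,a^{(*)}\|$ is controlled on low Fock levels, the image of $\rho_j^{\vec u}$ converges to the displaced thermal state $\Phi_{\vec u}$ uniformly for $\vec u$ in the stated range. The classical marginal converges to $N_{\vec u}$ because the shift in $L_3$ induced by the local perturbation is exactly $\sqrt{n}\,u_3$ to leading order. Summing the two contributions and bounding the probability that $j$ lies outside the typical window by Gaussian tail estimates yields the first line of \eqref{eq.channel.conv.}.

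For $S_n$, reverse the construction: given $(\phi, x)$ on $\mathcal{B}(\ell^2(\mathbb N))\otimes L^\infty(\mathbb R)$, draw $j$ from a discretisation of $x$ compatible with the support of $p_n(\cdot\mid\vec 0)$, embed $\phi$ into $\mathcal{H}_j$ via the adjoint isometry $V_j^*$ after truncating the Fock space at level $K_n \to \infty$ slowly enough, and tensor with the maximally mixed state on $\mathcal K_j$ before undoing the Schur--Weyl isomorphism. The same Holstein--Primakoff estimates together with the fact that the truncated Fock space carries negligible weight for $\Phi_{\vec u}$ uniformly in $\|\vec u\|\leq n^\epsilon$ give the second line of \eqref{eq.channel.conv.}.

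The main obstacle is the uniform control over the growing ball $\|\vec u\|\leq n^\epsilon$: both the displacement magnitude $\|D(\vec u)\|$ and the spin-$j$ generators blow up with $n$, and Holstein--Primakoff is only a good approximation on low-energy subspaces. The remedy is to truncate at Fock level $K_n = n^{2\epsilon}\,\mathrm{polylog}\,n$, show that the truncation error on $\Phi_{\vec u}$ decays super-polynomially via explicit moment bounds on the thermal oscillator, and then expand $V_j L_{1,2} V_j^*$ to second order in $1/\sqrt n$ to verify that the spin commutator $[L_1,L_2]/n \approx r_0$ reproduces the canonical commutation relation on this truncated subspace with error $O(K_n/n)$. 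Combined with the CLT convergence rate for the classical marginal (Berry--Esseen), this yields $\Delta(\mathcal Q_n,\mathcal G_n) = O(n^{-1/2+c\epsilon})$, which in particular implies the claimed limit.
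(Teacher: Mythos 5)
The paper itself contains no proof of this theorem: Section 3.2 gives only the central-limit heuristic for the collective spin components $L_1,L_2,L_3$ and explicitly defers the argument to \cite{Guta&Kahn,Guta&Janssens&Kahn,Guta&Kahn2}. Your construction --- Schur--Weyl decomposition of $(\mathbb{C}^2)^{\otimes n}$, tracing out the multiplicity spaces, reading the total-spin label $j$ off as the classical Gaussian component, and block isometries $V_j$ implementing a Holstein--Primakoff contraction onto Fock space --- is precisely the strategy of those references, so you have reconstructed the intended proof rather than found a genuinely different one; it correctly goes beyond the paper's weak-convergence heuristic by actually exhibiting the channels $T_n,S_n$ and aiming at uniform trace-norm bounds. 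The caveat is that essentially all of the analytic difficulty is concentrated in the one sentence asserting that $V_j\rho_j^{\vec{u}}V_j^{*}$ converges in trace norm to $\Phi_{\vec{u}}$, uniformly over the typical window of $j$ and over $\|\vec{u}\|\leq n^{\epsilon}$: this requires controlling the rotated, hence non-diagonal, block states (the action of the small unitary $U^{\otimes n}$ restricted to the spin-$j$ irrep), not just the geometric diagonal weights, and this is where most of the work in \cite{Guta&Kahn2} is spent; your truncation at Fock level $K_n$ is the right device, but the error $O(K_n/n)$ and the final rate $O(n^{-1/2+c\epsilon})$ are asserted rather than derived. Two small points to fix: the ladder-operator normalisation should be $\sqrt{2j}\approx\sqrt{nr_0}$ rather than $\sqrt{2jr_0}$, and the discrete output $j$ must be randomised (dithered or convolved with a narrow kernel) before its law can be compared in $L^1$ with the absolutely continuous $N_{\vec{u}}$. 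Neither point undermines the approach.
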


To conclude this section we would like to make a few comments on the significance of the above result. The first point is that although it was intuitively illustrated using the Central Limit Theorem, the concept of local asymptotic normality provides a stronger characterisation of the `Gaussian approximation'. 
Indeed the convergence in Theorem \ref{th.qlan.qubits} is strong (in $L_{1}$) rather than weak (in distribution), it is {\it uniform} over a range of local parameters rather than at a single point, and has an operational meaning based on quantum channels.

Secondly, one can exploit these features to devise asymptotically optimal measurement strategies for state estimation and prove that the Holevo bound 
\cite{Holevo} is asymptotically attainable \cite{Guta&Kahn3}.

Thirdly, the result can be applied to other quantum statistical problems 
involving i.i.d. qubit states such as cloning, teleportation benchmarks, quantum learning, and can serve as a mathematical framework for analysing quantum state transfer protocols.

\section{Local formulation of the classification problem}\label{sec.local.formulation}

In this section we reformulate the problem of quantum state classification in the `local' set-up. This allows us to replace, on the one hand the excess error probability by a {\it quadratic} form in local parameters, and on the other hand 
the training set consisting of i.i.d. spins by a simpler Gaussian shift model.

Throughout the section we restrict to the case where the priors 
$\pi_{0},\pi_{1}$ are known. In Section \ref{sec.priors} we show that the results for known priors can easily be extended to unknown ones by simply estimating them from the counts of $\rho$ and $\sigma$ states in the training sample.


\subsection{The loss function}

Recall that the classification problem is to discriminate between two 
unknown states  $\rho$ and $\sigma$ by learning from a training set of $n$ {\it labelled} systems prepared randomly in one of the states with probabilities 
$\pi_{0}$ and $\pi_{1}$. For this we measure the training set and produce an outcome which is itself a measurement 
$\widehat{M}_{n}:=(\widehat{P}_{n},\mathbf{1}-\widehat{P}_{n})$ on $\mathbb{C}^{2}$. The accuracy of the procedure 
is measured by the excess risk (\ref{eq.quantum_excess}):
\begin{equation}\label{eq.excess.risk}
R(\widehat{M}_{n}) =\mathbb{E} {\rm Tr}\left[(\pi_1 \sigma - \pi_0 \rho) (\widehat{P}_{n} - P^*) \right],
\end{equation}
with $P^*=[\pi_{0}\rho-\pi_{1}\sigma]_{+}$. Since any binary measurement is a mixture of projective POVM's \cite{DAriano_et_al2005}, we can assume without loss of generality 
that $\widehat{P}_{n}$ is a projection and pull back the randomness into the definition of the training set measurement.

As explained in section \ref{sec.qlan}, the a priori unknown states $\rho$ and $\sigma$ can be localised with high probability in $n^{-1/2+\epsilon}$ neighbourhoods of $\rho_{0}$ and $\sigma_{0}$ by sacrificing a small proportion of the training set systems; this means that $\rho_0$ and $\sigma_0$ are known and can be used by the classification procedure. 
Let $\vec{r}_{0}$ and $\vec{s}_{0}$ be the Bloch vectors of $\rho_{0}$ and 
$\sigma_{0}$ and let us parametrise their neighbourhoods as follows
\begin{align}
&\rho = \rho_{\vec{u}/\sqrt{n}} = 
\frac{1}{2} \left(
\mathbf{1} + \left(\vec{r}_{0}+ \frac{\vec{u}}{\sqrt{n}}\right)\vec{\sigma}
\right),\nonumber\\
&  \sigma = \sigma_{\vec{v}/\sqrt{n}} = 
\frac{1}{2} \left(
\mathbf{1} + \left(\vec{s}_{0}+ \frac{\vec{v}}{\sqrt{n}}\right)\vec{\sigma} \right).
\label{eq.rhosigma.local}
\end{align}

Let $P_{0}:= [\pi_{0}\rho_{0}-\pi_{1}\sigma_{0}]_{+}$ be the optimal projection corresponding to the pair $(\rho_{0},\sigma_{0})$ and note that it can have dimension one, or it can be zero or identity. 
In the second case, the optimal measurement is {\it trivial}, one can guess the state without measuring by checking whether the operator $\pi_{0}\rho_{0}-\pi_{1}\sigma_{0}$ is positive or negative. 
\begin{lemma}
Let $(\rho_{0},\sigma_{0})$ and $(\pi_{0}, \pi_{1})$ satisfy 
$$
\|\pi_{0}\vec{r}_{0} -\pi_{1}\vec{s}_{0}\| < |\pi_{0} - \pi_{1}|.
$$
Then $P_{0}$ is either zero or identity and the local minimax excess risk satisfies
$$
\inf_{\widehat{M}_{n}}\sup_{\|\vec{u} \|,\|\vec{v}\|\leq n^{\epsilon}} 
\mathbb{P}_{e}(\widehat{M}_{n}) - P_{e}^{*}  =O(\exp(-cn))
$$
for some $c>0$.
\end{lemma}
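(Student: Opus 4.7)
The plan is to show that under the hypothesis, $P_{0}$ is not merely a trivial projection but the gap between the two eigenvalues of $\pi_{0}\rho_{0}-\pi_{1}\sigma_{0}$ and zero is \emph{bounded below}, so that small perturbations (arising from the local parameters) cannot flip the sign of either eigenvalue. Then the learner does not even need to use the training set: the constant classifier $\widehat{M}_{n}=(P_{0},\mathbf{1}-P_{0})$ already coincides with $P^{*}$ on the entire local neighbourhood, giving zero excess risk once $n$ is large enough.

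First I would compute the spectrum of
$$
\pi_{0}\rho_{0}-\pi_{1}\sigma_{0}=\tfrac{1}{2}\bigl[(\pi_{0}-\pi_{1})\mathbf{1}+(\pi_{0}\vec{r}_{0}-\pi_{1}\vec{s}_{0})\cdot\vec{\sigma}\bigr],
$$
whose eigenvalues are $\tfrac{1}{2}\bigl[(\pi_{0}-\pi_{1})\pm\|\pi_{0}\vec{r}_{0}-\pi_{1}\vec{s}_{0}\|\bigr]$. The hypothesis $\|\pi_{0}\vec{r}_{0}-\pi_{1}\vec{s}_{0}\|<|\pi_{0}-\pi_{1}|$ forces both eigenvalues to carry the sign of $\pi_{0}-\pi_{1}$, so $P_{0}=\mathbf{1}$ (resp.\ $0$) when $\pi_{0}>\pi_{1}$ (resp.\ $\pi_{0}<\pi_{1}$). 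Moreover the quantity $\delta_{0}:=\tfrac{1}{2}\bigl(|\pi_{0}-\pi_{1}|-\|\pi_{0}\vec{r}_{0}-\pi_{1}\vec{s}_{0}\|\bigr)$ is strictly positive and bounds the distance of the spectrum from zero.

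Next I would control the perturbation. For $\rho=\rho_{\vec{u}/\sqrt{n}}$ and $\sigma=\sigma_{\vec{v}/\sqrt{n}}$ with $\|\vec{u}\|,\|\vec{v}\|\leq n^{\epsilon}$,
$$
\pi_{0}\rho-\pi_{1}\sigma=\pi_{0}\rho_{0}-\pi_{1}\sigma_{0}+\tfrac{1}{2\sqrt{n}}(\pi_{0}\vec{u}-\pi_{1}\vec{v})\cdot\vec{\sigma},
$$
and the operator norm of the perturbation is at most $n^{\epsilon-1/2}/2$. By Weyl's inequality, for $\epsilon<1/2$ and $n$ larger than some $n_{0}$ (depending only on $\delta_{0}$ and $\epsilon$) all eigenvalues of $\pi_{0}\rho-\pi_{1}\sigma$ retain the sign of $\pi_{0}-\pi_{1}$, whence $P^{*}=[\pi_{0}\rho-\pi_{1}\sigma]_{+}=P_{0}$ for every admissible $(\vec{u},\vec{v})$.

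Finally, I would take the deterministic classifier that ignores the training set and outputs $\widehat{P}_{n}=P_{0}$ (this is permissible since $\rho_{0},\sigma_{0}$ and the priors are assumed known in the local formulation). For $n\geq n_{0}$ its excess risk
$$
\mathrm{Tr}\bigl[(\pi_{1}\sigma-\pi_{0}\rho)(P_{0}-P^{*})\bigr]=0
$$
vanishes identically on the whole local ball. For the finitely many $n<n_{0}$ the excess risk is trivially bounded by a constant (at most~$2$), and these finitely many cases are absorbed into the $O(\exp(-cn))$ estimate by choosing $c>0$ small enough. There is essentially no genuine obstacle here: the only thing to be careful about is ensuring the bound $\delta_{0}>0$ from the \emph{strict} inequality is propagated quantitatively to the perturbed operator so that the spectrum has no chance to cross zero, and that the whole local ball is handled uniformly.
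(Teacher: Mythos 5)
Your proposal is correct, and the first two thirds coincide with the paper's argument: the same spectral computation showing that both eigenvalues of $\pi_{0}\rho_{0}-\pi_{1}\sigma_{0}$ carry the sign of $\pi_{0}-\pi_{1}$ (so $P_{0}\in\{0,\mathbf{1}\}$), and the same observation that the spectrum is bounded away from zero by a margin that the $O(n^{\epsilon-1/2})$ perturbation cannot overcome, so that $[\pi_{0}\rho-\pi_{1}\sigma]_{+}=P_{0}$ uniformly over the local ball for $n$ large. Where you diverge is the final step. The paper exhibits an explicit measurement strategy: estimate both states by separate Pauli-basis measurements on the training set, form the plug-in projection $[\hat\pi_{0}\hat\rho-\hat\pi_{1}\hat\sigma]_{+}$, and use concentration inequalities to show this equals $P^{*}$ except on an event of probability $O(\exp(-cn))$, which bounds the excess risk by $O(\exp(-cn))$. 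You instead notice that since $P^{*}$ is literally constant ($=P_{0}$) over the entire local neighbourhood once $n\geq n_{0}$, the deterministic classifier $\widehat{P}_{n}=P_{0}$, which ignores the training set, achieves excess risk exactly zero there; the finitely many $n<n_{0}$ are absorbed into the constant of the $O(\exp(-cn))$ bound. Your route is more elementary and gives a stronger conclusion (identically zero rather than exponentially small risk), but it leans on the local formulation in which $\rho_{0},\sigma_{0},\pi_{0},\pi_{1}$ are treated as known, so that the learner can recognise it is in the trivial regime and commit to $P_{0}$ without measuring. The paper's concentration argument buys a single operational strategy (measure, then plug in) that does not require the learner to decide in advance whether the trivial case obtains, which is why the authors phrase the bound as exponentially small rather than zero. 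Both arguments establish the lemma as stated.
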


\begin{proof}
Note that the inequality is satisfied only if $\pi_{0}\neq \pi_{1}$ and it implies that
$$
\pi_{0}\rho_{0} - \pi_{1}\sigma_{0}= 
\frac{\pi_{0}-\pi_{1}}{2}
\left(\mathbf{1} + 
\frac{\pi_{0}\vec{r}_{0} -\pi_{1}\vec{s}_{0}}{\pi_{0}-\pi_{1}} \vec{\sigma}
\right)
$$
it a positive or negative operator depending on the sign of $\pi_{0}-\pi_{1}$.

Since both eigenvalues of $\pi_{0}\rho_{0} - \pi_{1}\sigma_{0}$ are non-zero, 
there exists a constant $\eta>0$ such that 
$$
\|\pi_{0}\rho_{0} - \pi_{1}\sigma_{0} -A\|_{2} \leq \eta
$$
implies that $A$ is also  a positive or negative operator. In fact, when $n$ is 
large enough all $\pi_{0}\rho_{\vec{u}/\sqrt{n}} - \pi_{1}\sigma_{\vec{v}/\sqrt{n}}$ with $\|\vec{u}\|,\|\vec{v}\|\leq n^{\epsilon}$ have this property for some other constant $\tilde{\eta}$. 

Consider a simple measurement on the training set where the states are measured separately in the three bases of the Pauli matrices and the outcomes averages are used to construct a estimators of the states $\rho_{\vec{u}/\sqrt{n}}$ and 
$\sigma_{\vec{v}/\sqrt{n}}$. Then by basic concentration inequalities we get
$$
\mathbb{P}\left( 
\left\| 
\pi_{0}\left( \rho_{\vec{u}/\sqrt{n}}- \rho_{\hat{\vec{u}}/\sqrt{n}}\right) + 
\pi_{1} \left(\sigma_{\vec{v}/\sqrt{n}}- \sigma_{\hat{\vec{v}}/\sqrt{n}}\right)
\right\|_{2} \geq \tilde{\eta}\right)\leq exp(-cn) 
$$
which means that with exponentially small probability error the plug-in estimator of 
$P^{*}:= [\pi_{0}\rho_{\vec{u}/\sqrt{n}} - \pi_{1}\sigma_{\vec{v}/\sqrt{n}}]_{+}$ will be equal to $P^{*}$ which is zero or identity. 

\end{proof}

From now on we will work under the assumption that 
\begin{equation}\label{eq.assumption}
\|\pi_{0}\vec{r}_{0}-\pi_{1}\vec{s}_{0}\|>|\pi_{0}-\pi_{1}|,
\end{equation}
so that $P_{0}:=[\pi_{0}\rho_{0} - \pi_{1}\sigma_{0}]_{+}$ is a one dimensional projection whose Bloch vector is 
$$
\vec{p}_{0} 
= 
\frac{\vec{d}_{0}}{\|\vec{d}_{0}\|} 
:= 
\frac{ \pi_{0} \vec{r}_{0}  - \pi_{1} \vec{s}_{0} }{\| \pi_{0}\vec{r}_{0} - \pi_{1} \vec{s}_{0}  \| }.
$$


The Helstrom projection $P^{*}$ for the pair of unknown states $(\rho,\sigma)$ has Bloch vector 
\begin{equation}
\vec{p} 
= \frac{\vec{d}}{\|\vec{d}\|} = 
\frac{ \pi_{0}  \left(\vec{r}_{0} + \frac{\vec{u}}{\sqrt{n}}\right) -\pi_{1} \left(\vec{s}_{0} + \frac{\vec{v}}{\sqrt{n}}\right) }
{\left\| \pi_{0} \left(\vec{r}_{0} + \frac{\vec{u}}{\sqrt{n}}\right) - \pi_{1} \left(\vec{s}_{0} + \frac{\vec{v}}{\sqrt{n}}\right) \right\| }
= \frac{\vec{d}_{0} + \frac{\vec{z}}{\sqrt{n}}}{\left\|\vec{d}_{0} + \frac{\vec{z}}{\sqrt{n}}\right\|},
\label{eq.oracle.bloch}
\end{equation}
where $\vec{z}:=\pi_{0}\vec{u}-\pi_{1}\vec{v} $ is a relative parameter and 
$\vec{d} :=\vec{d}_{0}+ \frac{\vec{z}}{\sqrt{n}}$.

As discussed before, we can take the estimator $\widehat{M}_{n}$ to be a projective measurement 
$\widehat{M}_{n}:=(\widehat{P}_{n},\mathbf{1}-\widehat{P}_{n})$, so to minimise the risk \eqref{eq.excess.risk} we aim at producing an estimator 
$\widehat{P}_{n}$ which is close to $P^{*}$. Since the latter is obtained by rotating $P_{0}$ with angle of order $n^{-1/2+\epsilon}$, we can assume without loss of generality that $\widehat{P}_{n}$ has a Bloch vector $\hat{\vec{p}}_{n}$ which is a 
small rotation of $\vec{p}_{0}$ so that 
  \begin{equation}\label{eq.estimator.bloch}
\hat{\vec{p}}_{n}= \frac{\vec{p}_{0} + \hat{\vec{z}}_{n}/\sqrt{n}}{\|\vec{p}_{0} + \hat{\vec{z}}_{n}/\sqrt{n}\|},
\end{equation}
with $\hat{\vec{z}}_{n}=O(n^{\epsilon})$ 
a vector in the plane orthogonal to $\vec{p}_{0}$. 

Expanding \eqref{eq.oracle.bloch} and \eqref{eq.estimator.bloch} in powers of $n^{-1/2}$ we get
\begin{align*}
\vec{p}-\hat{\vec{p}}_{n}
&= 
\frac{1}{\sqrt{n}} 
\left[ 
\frac{\vec{z} -\hat{\vec{z}}_n }{\|\vec{d}_{0}\|} - 
\frac{\vec{d}_{0} (\vec{d}_{0} \cdot (\vec{z} -\hat{\vec{z}}_n) ) }{\|\vec{d}_{0}\|^{3}}
\right]\\
&+
\frac{1}{n}\left[
 -\frac{\vec{d}_{0}(\|\vec{z}\|^{2}- \|\hat{\vec{z}}_n\|^{2} )}{2\|\vec{d}_{0}\|^{3} } + \frac{3 \vec{d}_{0} 
(( \vec{d}_{0}\cdot \vec{z} )^{2} - (\vec{d}_{0} \cdot \hat{\vec{z}}_n)^{2} }{2\|\vec{d}_{0}\|^{5}} \right] +o(n^{-1}).
\end{align*}

%

We now plug these expressions back into into 
\eqref{eq.excess.risk} taking into account that $\hat{\vec{z}}_n$ is perpendicular to $\vec{d}_{0}$ and obtain
\begin{eqnarray*}
\mathbb{P}_{e}(\widehat{M}_{n})-P^{*}_{e} 
&= &\mathbb{E} {\rm Tr}\left( (\pi_{0}\rho-\pi_{1}\sigma) (P-\widehat{P}_{n}) \right)\\
&=&
\frac{1}{2} \mathbb{E}\vec{d} \cdot (\vec{p}-\hat{\vec{p}}_{n})\\
&=&
\frac{1}{4n \|\vec{d}_{0} \|} \mathbb{E}\| \vec{z}_{\perp} -\hat{\vec{z}}_n\|^{2} + o(n^{-1})
\end{eqnarray*}
where $\vec{z}_{\perp}= \vec{z} - \vec{d}_{0}(\vec{z}\cdot \vec{d}_{0})/\|\vec{d}_{0}\|^{2} $ is the projection of $\vec{z}$ onto the plane orthogonal to $\vec{d}_{0}$. 

It is clear now that the rate of convergence of the excess risk \eqref{eq.excess.risk} 
is $n^{-1}$, so it  is meaningful to optimise the quantity 
$n R^{(l)}_{max}(\widehat{M}_{n})$, and the contribution coming from the $o(n^{-1})$ term can be dropped. 

Since $\widehat{M}_{n}$ is uniquely determined by $\hat{\vec{z}}_{n}$ by 
\eqref{eq.estimator.bloch}, we define the {\it quadratic} loss function for the measurement on the training set in terms of local variables
\begin{equation}\label{loss.fct.quadratic}
L((\vec{u},\vec{v}), \hat{\vec{z}}_n):= 
\frac{1}{4 \|\vec{d}_{0} \|} \| \vec{z}_{\perp} -\hat{\vec{z}}_n\|^{2},   \quad 
\vec{z}:=(\pi_{0}\vec{u}-\pi_{1}\vec{v})
\end{equation}
and the associated renormalised risk is
$
R_{\vec{u},\vec{v}}(\hat{\vec{z}}_n):= \mathbb{E} L((\vec{u},\vec{v}), \hat{\vec{z}}_n).
$
The local maximum risk \eqref{eq.local.maximum.risk} 
around $(\rho_{0},\sigma_{0})$ is then
\begin{align}\label{eq.quadratic.max.risk}
R^{(l)}_{max} (\hat{\vec{z}}_{n}~; ~\rho_{0},\sigma_{0})
&:=
\sup_{\|\vec{u}\|,\|\vec{v}\|\leq n^{\epsilon}} 
R_{\vec{u},\vec{v}}(\hat{\vec{z}}_n)\nonumber\\
&=
\sup_{\|\vec{u}\|,\|\vec{v}\|\leq n^{\epsilon}} \frac{1}{4 \|\vec{d}_{0} \|} \mathbb{E}\| \vec{z}_{\perp} -\hat{\vec{z}}_n\|^{2}.
\end{align}
In conclusion, we need to find the optimal measurement strategy on the training 
set with respect to the above quadratic form of the local parameters.

\subsection{The training set}

To solve the above problem we employ the machinery of local asymptotic normality. As before, let $\rho$ and $\sigma$ be states in local neighbourhood of $\rho_{0}$ and respectively $\sigma_{0}$ described by \eqref{eq.rhosigma.local}. We write their local Bloch vectors $(\vec{u},\vec{v})$ as
$$
\vec{u} = u_{1} \vec{a_{1}} + u_{2} \vec{a_{2}} + u_{2}\vec{a_{3}} 
\qquad{\rm and}\qquad 
\vec{v} = v_{1} \vec{b_{1}} + v_{2} \vec{b_{2}} + v_{2}\vec{b_{3}}
$$ 
where $(\vec{a}_{1},\vec{a}_{2},\vec{a}_{3})$ and 
$(\vec{b}_{1},\vec{b}_{2},\vec{b}_{3})$ are two coordinate systems which satisfy the conditions (see Figure \ref{fig.bloch2})
\begin{enumerate}
\setlength{\itemsep}{0pt}
\setlength{\parskip}{1pt}
\setlength{\parsep}{1pt}
\item 
$\vec{a}_{3}$ is parallel to $\vec{r}_{0}$, 
\item
$\vec{b}_{3}$ is parallel to $\vec{s}_{0}$ ,
\item 
$\vec{a}_{1},\vec{b}_{1}$ are in the plane $(\vec{r}_{0},\vec{s}_{0})$,
\item
$\vec{a}_{2}= \vec{b}_{2}$ is perpendicular to the plane  $(\vec{r},\vec{s})$.
\end{enumerate}

With these notations the  local statistical model for the training set is
$$
\mathcal{T}_{n}:= \{ \rho_{\vec{u}}^{n\pi_{0}} \otimes \sigma_{\vec{v}}^{n\pi_{1}} :\|\vec{u}\|, \|\vec{v}\|\leq n^{\epsilon}\}
$$
and the corresponding Gaussian shift model is 
\begin{equation}\label{eq.g2}
\mathcal{G}^{(2)}:= \{ N_{\vec{u}} \otimes N_{\vec{v}} \otimes \Phi_{\vec{u}} \otimes \Phi_{\vec{v}} : \vec{u}, \vec{v}\in \mathbb{R}^{3} \}
\end{equation}
where 
\begin{align}
N_{\vec{u}} &:= N(\sqrt{\pi_{0}} u_{3}, 1-r_{0}^{2} ),
\nonumber\\ 
N_{\vec{v}} &:= N(\sqrt{\pi_{1}} v_{3}, 1-s_{0}^{2}),
\nonumber \\
\Phi_{\vec{u}}&:= \Phi\left(\sqrt{\frac{\pi_{0}}{2r_{0}}}u_{1},\,\sqrt{\frac{\pi_{0}}{ 2r_{0}}} u_{2}; \,\frac{\mathbf{1}}{2r_{0}}\right),
\nonumber\\
\Phi_{\vec{v}}&:= \Phi\left(\sqrt{\frac{\pi_{1}}{2s_{0}}}v_{1},\,\sqrt{\frac{\pi_{1}}{ 2s_{0}}} v_{2}; \,\frac{\mathbf{1}}{2s_{0}}\right)
\end{align}
and $\Phi(q,p,v)$ is a displaced thermal equilibrium state with means 
$(q,p)$ and variance $v$.

The following technical lemma shows that local asymptotic normality can be used to transfer the problem of the optimal classification from a training set consisting of qubits, to a Gaussian one. The arguments are rather standard though tedious, and since the same method has been used for finding the optimal estimation procedure for qubits \cite{Guta&Janssens&Kahn}, we refer to that paper for the proof.   

\begin{lemma}
Consider the problems of finding asymptotically optimal strategies for the models 
$\mathcal{T}_{n}$ and respectively $\mathcal{G}^{(2)}_{n}$ with respect to the 
loss function \eqref{loss.fct.quadratic}. Then the local minimax risks of both problems converge to the same constant which is the 
the minimax risk of the unrestricted Gaussian shift model $\mathcal{G}^{(2)}$.
\end{lemma}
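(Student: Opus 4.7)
The plan is to use the LAN convergence of Theorem \ref{th.qlan.qubits} (applied twice, once to each component of the training set) to transport measurement strategies and their risks between the qubit model $\mathcal{T}_n$ and the Gaussian shift model $\mathcal{G}^{(2)}_n$, then show that the local parameter restriction $\|\vec u\|,\|\vec v\|\le n^{\epsilon}$ can be removed in the Gaussian limit.

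First, I would establish the two one-sided inequalities. Given any measurement strategy $\widehat{M}_{n}$ on $\mathcal{T}_n$ producing the estimator $\hat{\vec z}_n$, compose with the LAN channel $S_n$ (in fact, the product channel $S_n^{(0)}\otimes S_n^{(1)}$ obtained from the two one-sided LAN channels) to obtain a measurement $\widehat{M}_n\circ S_n$ on the Gaussian model. Since the loss \eqref{loss.fct.quadratic} is bounded by a POVM-independent quantity $\|\vec z_\perp\|^2/(4\|\vec d_0\|)$ plus $\|\hat{\vec z}_n\|^2/(4\|\vec d_0\|)$, a standard argument (truncate $\hat{\vec z}_n$ at radius $n^{\epsilon}$ without loss of generality, since anything farther is a worse estimator than $0$) shows the loss is deterministically bounded by $O(n^{2\epsilon})$ on the relevant local ball. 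Combining with the trace-norm convergence $\|S_n(\rho^{n\pi_0}_{\vec u}\otimes\sigma^{n\pi_1}_{\vec v})-\Phi_{\vec u}\otimes N_{\vec u}\otimes\Phi_{\vec v}\otimes N_{\vec v}\|_1=o(1)$ uniformly in the local ball, the two risks differ by $O(n^{2\epsilon})\cdot o(1)$, which vanishes provided $\epsilon$ is chosen small and the LAN convergence rate is faster than any polynomial (as is the case for the channels in \cite{Guta&Janssens&Kahn}). This gives
\[
\limsup_{n\to\infty} R^{(l)}_{\max}(\hat{\vec z}_n;\rho_0,\sigma_0)\ge R_{\mathrm{minmax}}(\mathcal{G}^{(2)}_n)+o(1).
\]
Running the argument in the opposite direction with the channel $T_n$ yields the matching upper bound, so the local minimax risks of $\mathcal{T}_n$ and $\mathcal{G}^{(2)}_n$ coincide asymptotically.

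Next, I would show that restriction to $\|\vec u\|,\|\vec v\|\le n^\epsilon$ does not affect the Gaussian minimax risk in the limit. The unrestricted model $\mathcal{G}^{(2)}$ is translation-covariant in $(\vec u,\vec v)$ under the shifts $D(\vec u)$ on the oscillator factors and ordinary translations on the classical Gaussian factors, and the loss $\frac{1}{4\|\vec d_0\|}\|\vec z_\perp-\hat{\vec z}_n\|^2$ is itself covariant under the induced shift on $\hat{\vec z}_n$. A Hunt--Stein type averaging argument implies the minimax optimum is attained by an equivariant procedure, whose risk is constant in $(\vec u,\vec v)$; therefore the supremum over the local ball equals the supremum over $\mathbb{R}^6$, and the minimax risks of $\mathcal{G}^{(2)}_n$ and $\mathcal{G}^{(2)}$ agree for all $n$ large enough. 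Chaining with the first step yields the claim.

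The main obstacle is the handling of the unboundedness of the quadratic loss under LAN transfer: the Le Cam distance only controls risk differences for bounded loss, so one must truncate the estimator and show that truncation is costless in the local regime. Concretely, any sensible estimator $\hat{\vec z}_n$ can be assumed to lie in a ball of radius $O(n^\epsilon)$ (since replacing it by its projection onto that ball can only decrease the loss uniformly over the parameter set $\|\vec u\|,\|\vec v\|\le n^\epsilon$), after which the loss is bounded by $Cn^{2\epsilon}$ and the LAN error multiplied by this loss still goes to zero. This truncation step, together with the uniformity and quantitative rate of LAN convergence established in \cite{Guta&Janssens&Kahn,Guta&Kahn2}, is what makes the transfer rigorous; the remaining algebraic computations are routine and are omitted by appeal to that paper.
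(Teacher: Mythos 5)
The paper does not actually write out a proof of this lemma: it declares the argument ``standard though tedious'' and defers entirely to \cite{Guta&Janssens&Kahn}, where the same LAN-transfer machinery is carried out for optimal qubit state estimation. Your proposal reconstructs precisely that method --- two-sided transfer of strategies through the channels $T_n$ and $S_n$, truncation of the estimator to tame the unbounded quadratic loss, and a covariance argument in the Gaussian limit --- so in approach you are doing what the paper intends, and your identification of the unboundedness of the loss as the main technical obstacle, together with the convex-projection truncation (valid because the parameter set for $\vec z_\perp$ is a ball, so projecting $\hat{\vec z}_n$ onto it can only decrease $\|\vec z_\perp-\hat{\vec z}_n\|^2$), is the correct fix. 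One small imprecision: the trace-norm convergence of the LAN channels in \cite{Guta&Janssens&Kahn,Guta&Kahn2} is polynomial, of the form $O(n^{-\kappa+c\epsilon})$ for some $\kappa>0$, not faster than any polynomial; the argument still closes because $n^{2\epsilon}\cdot n^{-\kappa+c\epsilon}\to 0$ once $\epsilon$ is chosen small relative to $\kappa$, but you should not rely on a superpolynomial rate.

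The one step that is genuinely under-argued is the removal of the restriction $\|\vec u\|,\|\vec v\|\le n^\epsilon$ in the Gaussian model. Your Hunt--Stein argument produces an equivariant procedure for the \emph{unrestricted} model $\mathcal{G}^{(2)}$ with constant risk; restricting it to the ball shows $R_{\mathrm{minmax}}(\mathcal{G}^{(2)}_n)\le R_{\mathrm{minmax}}(\mathcal{G}^{(2)})$, which is anyway trivial since the supremum on the left ranges over fewer parameters. What the lower bound for the qubit problem actually requires is the reverse inequality in the limit, namely $\liminf_n R_{\mathrm{minmax}}(\mathcal{G}^{(2)}_n)\ge R_{\mathrm{minmax}}(\mathcal{G}^{(2)})$: a priori a strategy tuned to the ball of radius $n^\epsilon$ could beat the unrestricted minimax there. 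Equivariance alone does not rule this out; one needs a genuine lower bound on the restricted model, e.g.\ a Bayesian argument with priors spreading over the growing balls combined with the Holevo bound for the quadratic loss (this is how \cite{Guta&Janssens&Kahn} handles it), or a quantum van Trees-type inequality. Without that ingredient your chain of inequalities establishes achievability but not optimality, which is the substantive half of the lemma.
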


In conclusion, the measurement of the training set should be aimed at optimally estimating the two parameter vector $\vec{z}_{\perp}$ directly, rather than using a `plug-in' strategy where the three dimensional local parameters $(\vec{u},\vec{v})$ are  first (optimally) estimated and then the measurement 
$\widehat{P}_{n}$ is  constructed as in \eqref{eq.plug.in.quantum}. We will come back to this point later on when the two methods will be compared.

\section{Optimal classifier}
\label{sec.theorem}
In this section we formulate our main result characterising the asymptotically 
optimal measurement on the training set and derive the expression of the 
optimal excess risk. Summarising the previous section, we transformed the original problem into a parameter estimation one for the Gaussian shift model 
\eqref{eq.g2}  with parameters 
$(\vec{u},\vec{v})\in\mathbb{R}^{3}\times \mathbb{R}^{3}$. 
The parameter to be estimated $\vec{z}_{\perp}\in \mathbb{R}^{2}$ is a linear transformation of $(\vec{u},\vec{v})$ 
$$
\vec{z}_{\perp}= \vec{z} - \vec{d}_{0}(\vec{z}\cdot \vec{d}_{0})/\|\vec{d}_{0}\|^{2} , 
\qquad \vec{z}:= \pi_{0}\vec{u}-\pi_{1}\vec{v}
$$
i.e. we would like to minimise the risk
$$
R_{max}(\hat{\vec{z}}; \rho_{0},\sigma_{0}) := 
\sup_{\vec{u},\vec{v}}\mathbb{E} L((\vec{u},\vec{v}), \hat{\vec{z}})=
\sup_{\vec{u},\vec{v}} \frac{1}{4\|\vec{d}_{0}\|} 
\mathbb{E}\|\hat{\vec{z}}- \vec{z}_{\perp}\|^{2}. 
$$.

Since the local parameters contain both classical and quantum components it is convenient to express the loss function $L((\vec{u},\vec{v}), \hat{\vec{z}})$ in terms 
of these components. Let $(\vec{p}_{0},\vec{l}_{0},\vec{k}_{0})$ be the reference frame with  $\vec{l}_{0}$ in the plane $(\vec{r}_{0},\vec{s}_{0})$. Denote by $\varphi_{0},\varphi_{1}$ the angles between $(\vec{r}_{0},\vec{l}_{0})$ and  respectively 
$(\vec{s}_{0},\vec{l}_{0})$ (see Figure \ref{fig.bloch2}). 
Then $\vec{z}_{\perp}= z_{l} \vec{l}_{0}+z_{k} \vec{k}_{0}$ with components 
\begin{align*}
z_{l}
&= 
(\pi_{0}\cos\varphi_{0} u_{3}- \pi_{1}\cos\varphi_{1}v_{3})+(\pi_{0}\sin\varphi_{0} u_{1} +\pi_{1}\sin\varphi_{1} v_{1} )
:= z^{(c)}_{l}+z^{(q)}_{l},
\\
z_{k}&= \pi_{0} u_{2} - \pi_{1} v_{2} 
\end{align*}
where $z_{l}$ was split into a contribution coming from the `classical' parameters 
$(u_{3},v_{3})$, and another one from the `quantum' parameters. Since the classical and quantum parts of the Gaussian model are {\it independent} it is easy to verify that the optimal estimator $\hat{\vec{z}}$ can be written as 
$$
\hat{\vec{z}} = (\hat{z}^{(c)}_{l}+\hat{z}^{(q)}_{l})\vec{l}_{0}+ \hat{z}_{k} \vec{k}_{0}
$$
where $\hat{z}^{(c)}_{l}$ is the optimal estimator of  $z^{(c)}_{l}$ and 
$(\hat{z}^{(q)}_{l},\hat{z}_{k})$ are optimal estimators of 
$(z^{(q)}_{l},z_{k})$ obtained by (jointly) measuring the two quantum Gaussian components. The excess risk can be written as
\begin{eqnarray*}
4\|\vec{d}_{0}\| \mathbb{E}\left[ L\left((\vec{u},\vec{v}),\hat{\vec{z}}\right)\right]&=& 
\mathbb{E}\left[ (z^{(c)}_{l} - \hat{z}^{(c)}_{l} )^{2}\right]  \\
&+&
\mathbb{E}\left[ (z^{(q)}_{l} - \hat{z}^{(q)}_{l})^{2} + (z_{k} - \hat{z}_{k})^{2}\right] 
\end{eqnarray*}
where the classical and quantum contributions separate and can be optimised separately. The optimal choice for the classical estimator is 
$$
\hat{z}^{(c)}_{l} = 
\sqrt{\pi_{0}} \cos\varphi_{0} X_{r}  -  
\sqrt{\pi_{1}} \cos\varphi_{1} X_{s}
$$
where $(X_{r},X_{s})\sim N_{\vec{u}}\otimes N_{\vec{v}}$ denote the random variables making up the classical part of the limit Gaussian model. Its contribution to the excess risk is
\begin{equation}\label{eq.classical.risk}
\mathbb{E}\left[ \left(z^{(c)}_{l} - \hat{z}^{(c)}_{l} \right)^{2}\right] = 
\pi_{0} (1-r_{0}^{2})\cos^{2}\varphi_{0} + \pi_{1} (1-s_{0}^{2})\cos^{2}\varphi_{1}. 
\end{equation}
On the other hand $(z^{(q)}_{l}, z_{k})$ are the means of the canonical coordinates 
\begin{align}
Q^{(l)}&
:= \sqrt{2r_{0}\pi_{0}} \sin\varphi_{0} Q_{1} + \sqrt{2s_{0}\pi_{1}} \sin\varphi_{1}Q_{2},\nonumber\\
Q^{(k)}&
:= \sqrt{2r_{0}\pi_{0}}P_{1} - \sqrt{2s_{0}\pi_{1}} P_{2}
\label{eq.ql.qk}
\end{align} 
whose commutator is 
$$
[Q^{(l)},Q^{(k)}]= i(2r_{0}\pi_{0}\sin\varphi_{0}-2s_{0}\pi_{1}\sin\varphi_{1}) \mathbf{1}:=ic\mathbf{1}.
$$
Now, the optimal joint measurement of canonical variables is the heterodyne type 
where the non-commuting coordinates are combined with the coordinates of an additional oscillator prepared in a squeezed state 
\cite{Holevo,Guta&Janssens&Kahn}. The optimal mean square error is
\begin{align}
\mathbb{E}\left[ (z^{(q)}_{l} - \hat{z}^{(q)}_{l})^{2} + (z_{k} - \hat{z}_{k})^{2}\right] 
&= Var(Q^{(l)})+Var(Q^{(k)}) +|c| 
 \nonumber \\
& = \pi_{0}\sin^{2}\varphi_{0}+\pi_{1}\sin^{2}\varphi_{1} +1 \nonumber\\
& + 2 |\pi_{0}r_{0}\sin\varphi_{0}- \pi_{1}s_{0}\sin\varphi_{1}|
\label{eq.quantum.risk}
\end{align}
Adding the classical and quantum contributions \eqref{eq.classical.risk} and 
\eqref{eq.quantum.risk} we obtain the minimax risk
\begin{equation}\label{eq.optimal.risk}
R_{minmax}^{(l)}(\rho_{0},\sigma_{0})= 
\frac {2+2|\pi_{0}r_{0}\sin\varphi_{0}- \pi_{1}s_{0}\sin\varphi_{1}| -r_{0}s_{0} \cos\varphi_{0} \cos\varphi_{1}}{4\|\vec{d}_{0}\|}
\end{equation}
which only depends on the states $(\rho_{0},\sigma_{0})$, for given priors $(\pi_{0},\pi_{1})$.

\begin{theorem}\label{th.main}
Consider the quantum classification problem with training set 
$\rho^{\otimes \pi_{0}n}\otimes \sigma^{\otimes \pi_{1} n}$ where 
$\rho,\sigma$ are unknown qubit states and $(\pi_{0},\pi_{1})$ are known. 

Let $R_{minmax}^{(l)}(\rho_{0}, \sigma_{0})$ be the local minimax risk as defined 
in Section \ref{sec.local.minimax.}. Under the assumption 
\eqref{eq.assumption}, $R_{minmax}^{(l)}(\rho_{0}, \sigma_{0})$ is given by 
\eqref{eq.optimal.risk}.

\vspace{1mm}

The optimal measurement consists of the following steps:
\begin{enumerate}
\item 
construct rough estimators of $\rho$ and $\sigma$ by measuring $n^{1-\epsilon}$ systems;
\item
transfer the localised spins state by $T_{n}$ as in Theorem \ref{th.qlan.qubits} ;
\item
perform the optimal coherent measurement of $(Q^{(l)},Q^{(k)})$ and combine with classical estimator $\hat{z}_{c}^{l}$ to produce estimator $\widehat{P}_{n}$.
\end{enumerate}
\end{theorem}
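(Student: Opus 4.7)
The plan is to leverage everything already set up in Sections \ref{sec.local.formulation} and \ref{sec.qlan}. The work has essentially been organized into three independent reductions that can now be composed. First, I would use Theorem \ref{th.qlan.qubits} applied to each half of the training set separately: the tensor product channel $T_n \otimes T_n$ (and its inverse $S_n \otimes S_n$) converts the qubit model $\mathcal{T}_n$ into the Gaussian shift model $\mathcal{G}^{(2)}_n$ restricted to $\|\vec u\|,\|\vec v\|\le n^\epsilon$, up to an error in trace norm that vanishes uniformly. Applying the lemma immediately preceding the theorem, this transfer preserves the local minimax risk for the quadratic loss \eqref{loss.fct.quadratic}, reducing the entire problem to computing the minimax risk of estimating $\vec z_\perp$ from a single copy of $N_{\vec u}\otimes N_{\vec v}\otimes \Phi_{\vec u}\otimes \Phi_{\vec v}$ over unrestricted parameters $\vec u,\vec v\in\mathbb{R}^3$.

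Second, I would decompose the loss in the basis $(\vec p_0,\vec l_0,\vec k_0)$. Because the full Gaussian state factorizes as a tensor product of two classical Gaussians $N_{\vec u}\otimes N_{\vec v}$ and two independent quantum oscillators $\Phi_{\vec u}\otimes \Phi_{\vec v}$, any estimator $\hat{\vec z}$ can without loss be taken of the additive form $\hat{z}^{(c)}_l + \hat{z}^{(q)}_l$ plus $\hat z_k$, where $\hat{z}^{(c)}_l$ is a function of the classical variables and $(\hat{z}^{(q)}_l,\hat{z}_k)$ is produced by a joint measurement on the two oscillators. The mean-square error then splits as the sum of a classical and a quantum contribution, each of which can be minimized independently of the other.

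Third, I would compute the two contributions. The classical one is a trivial Gaussian mean-estimation problem: the optimal unbiased linear estimator of $z^{(c)}_l=\pi_0\cos\varphi_0\, u_3-\pi_1\cos\varphi_1\, v_3$ from $(X_r,X_s)\sim N_{\vec u}\otimes N_{\vec v}$ is the obvious linear combination, yielding \eqref{eq.classical.risk}. For the quantum part, the two means $(z^{(q)}_l,z_k)$ are the expectations of the canonical variables $(Q^{(l)},Q^{(k)})$ defined in \eqref{eq.ql.qk}, with commutator $ic\mathbf{1}$. Here the key input is Holevo's theorem on joint measurements of non-commuting observables in Gaussian shift models: the minimum of $\mathrm{Var}(\hat z^{(q)}_l)+\mathrm{Var}(\hat z_k)$ over all POVMs equals $\mathrm{Var}(Q^{(l)})+\mathrm{Var}(Q^{(k)})+|c|$, attained by a heterodyne-type measurement that couples the system to an ancillary squeezed oscillator. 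This reproduces \eqref{eq.quantum.risk}, and summing with the classical part gives \eqref{eq.optimal.risk}. The three-step achievability recipe in the theorem statement is then simply: do the rough estimation of Section \ref{sec.qlan}, apply $T_n$ (which is physically a quantum channel), and execute the classical estimator together with the heterodyne measurement just described on the Gaussian limit.

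The main obstacle, as usual in such asymptotic optimality statements, is the lower bound rather than the upper bound. The upper bound is a direct construction. The lower bound requires two pieces glued together: the uniform equivalence of models $\mathcal{T}_n$ and $\mathcal{G}^{(2)}_n$ under the Le Cam distance (so that no qubit procedure can beat the Gaussian bound even asymptotically), and Holevo's matching bound for the Gaussian shift. The former is inherited from the cited LAN results and is the point where one must check that the localization step, the truncation to $\|\vec u\|,\|\vec v\|\le n^\epsilon$, and the two-step adaptive procedure do not inflate the supremum; the argument is exactly that of \cite{Guta&Janssens&Kahn} and is what the lemma before the theorem is invoking. The latter, Holevo's bound for a two-parameter Gaussian shift of a single mode with a squeezed-state ancilla, is a standard quantum Cramér--Rao-type computation.
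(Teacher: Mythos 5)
Your proposal follows essentially the same route as the paper: reduction of the training-set model to the Gaussian shift model via local asymptotic normality, separation of the loss into independent classical and quantum contributions in the $(\vec{p}_{0},\vec{l}_{0},\vec{k}_{0})$ frame, and Holevo's bound $\mathrm{Var}(Q^{(l)})+\mathrm{Var}(Q^{(k)})+|c|$ for the joint measurement of the non-commuting canonical variables, summed to give \eqref{eq.optimal.risk}. Your explicit remark that the lower bound is the delicate part (uniformity of the Le Cam equivalence plus the matching Gaussian bound) is a fair and accurate gloss on what the paper delegates to \cite{Guta&Janssens&Kahn} and \cite{Holevo}.
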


\begin{figure}
\begin{center}
\includegraphics[width=10cm]{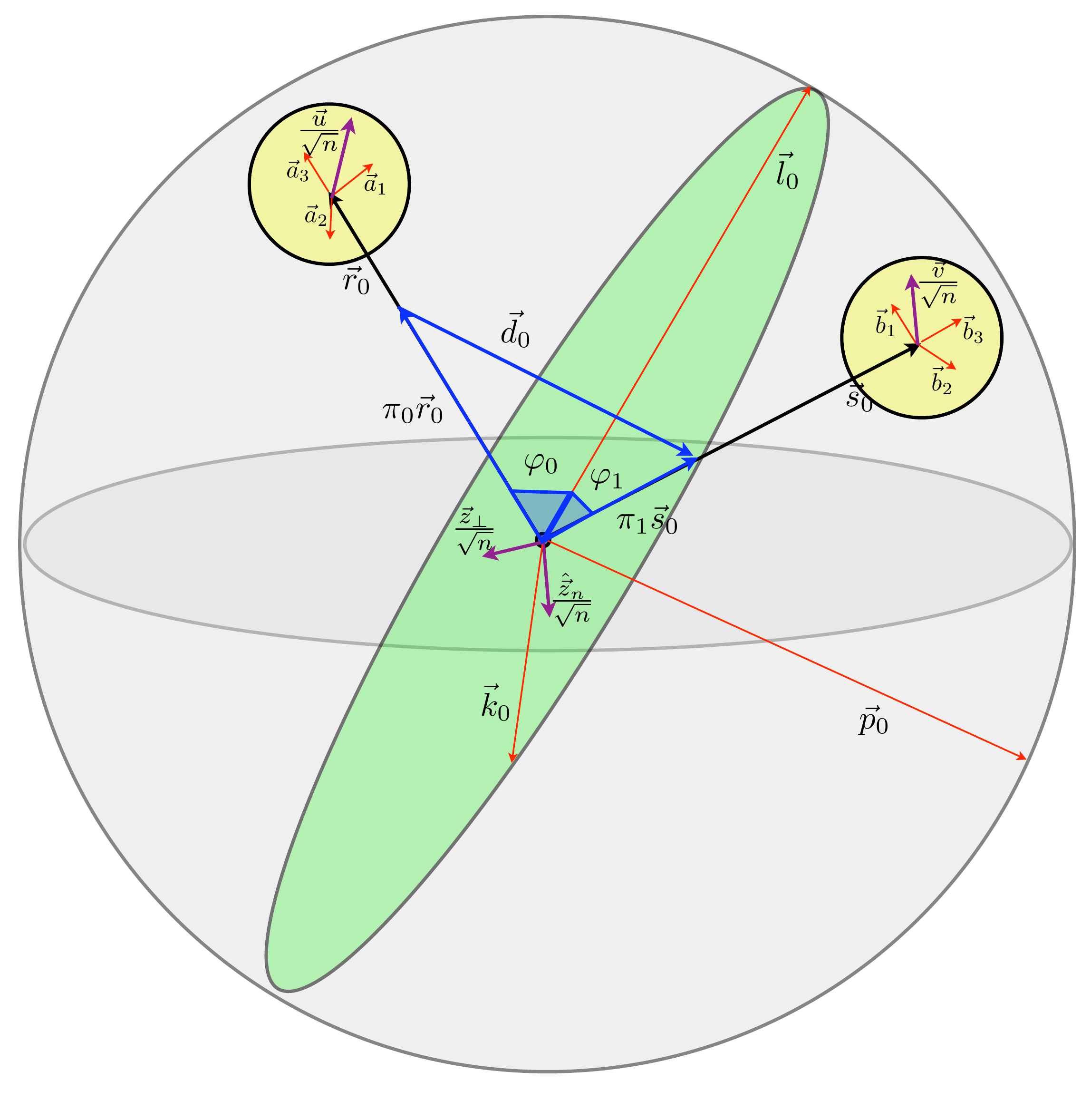}
\end{center}
\begin{flushleft}
\caption{
Bloch ball geometry of the learning problem. 

\vspace{1mm}

\noindent
The unknown states are localised in the two yellow balls centred at $\vec{r}_{0}$ and $\vec{s}_{0}$ and have local vectors $\vec{u}/\sqrt{n}$ and $\vec{v}/\sqrt{n}$ coloured in purple.\\

\noindent
The three reference systems $(\vec{a}_{1},\vec{a}_{2},\vec{a}_{3})$, $(\vec{b}_{1},\vec{b}_{2},\vec{b}_{3})$ and $(\vec{p}_{0},\vec{l}_{0},\vec{k}_{0})$ are coloured in red.\\

\noindent
The green equatorial plane is orthogonal to $\vec{p}_{0}$ and contains the estimator 
$\hat{\vec{z}}_{n}$ and the vector to be estimated $\vec{z}_{\perp}$ 
(coloured in purple).

}

\label{fig.bloch2}
\end{flushleft}
\end{figure}

\subsection{Plug-in classifier based on optimal state estimation}

\label{sec.comparison.plugin}
Here we compute the asymptotics of the renormalised risk of the plug-in 
classifier based on optimal state estimation.

The problem of optimal state estimation for mixed i.i.d. qubits was solved in the 
asymptotic local minimax setting in \cite{Guta&Janssens&Kahn}. 
The optimal measurement procedure is adaptive and the first two steps are identical to those of Theorem \ref{th.main}

\begin{enumerate}
\item 
construct rough estimators of $\rho$ and $\sigma$ by measuring $n^{1-\epsilon}$ systems;
\item
transfer the localised spins state by $T_{n}$ as in Theorem \ref{th.qlan.qubits} ;
\item
Perform separate heterodyne measurements on the modes 
$(Q_{1},P_{1})$ and $(Q_{2},P_{2})$ and observe the classical components to obtain the estimators $\tilde{\vec{u}}_{n}$ and $\tilde{\vec{v}}_{n}$.
\end{enumerate}
 
Once the states (local parameters) have been estimated we can classify new states by applying the plug-in measurement 
$\widetilde{M}_{n}:=(\widetilde{P}_{n}, \mathbf{1}-\widetilde{P}_{n})$ where 
$\widetilde{P}_{n}$ has Bloch vector
\begin{equation}\label{eq.plug.in.estimation}
\tilde{\vec{p}}= 
 \frac{\tilde{\vec{d}}}{\|\tilde{\vec{d}}\|} 
= \frac{\vec{d}_{0} + \frac{\tilde{\vec{z}}_{n}}{\sqrt{n}}}{\left\|\vec{d}_{0} + \frac{\tilde{\vec{z}}}{\sqrt{n}}\right\|}, \qquad 
\tilde{\vec{z}}_{n}:= \tilde{\vec{z}}_{\perp}:=
(\pi_{0}\tilde{\vec{u}}_{n}- \pi_{1}\tilde{\vec{v}}_{n})_{\perp}.
\end{equation}
Note that $\tilde{\vec{z}}_{n}$ was chosen to be the orthogonal component of 
$\tilde{\vec{z}}$ onto the vector $\vec{p}_{0}$ rather than $\tilde{\vec{z}}$ itself. However a simple Taylor expansion shows that the two estimators give the same leading order contribution to the risk.  As before, the minimax risk is 
the expectation of the quadratic loss function 
$L((\vec{u},\vec{v}), \tilde{\vec{z}})$ defined in 
\eqref{loss.fct.quadratic}, but now with $\tilde{\vec{z}}$ having a different 
distribution compared with  the optimal $\hat{\vec{z}}$. Again, we write 
$\tilde{z}$ as
$$
\tilde{z}=\tilde{z}_{l}\vec{l}_{0}+\tilde{z}_{k}\vec{k}_{0}=
(\tilde{z}^{c}_{l}+ \tilde{z}^{q}_{l}) \vec{l}_{0} + \tilde{z}_{k}\vec{k}_{0}
$$ 
and the risk is 
\begin{equation*}\label{eq.quadratic.max.risk.plugin}
R_{max} (\tilde{\vec{z}}~; ~\rho_{0},\sigma_{0})
=
\frac{1}{4 \|\vec{d}_{0} \|} \mathbb{E}\left[ 
(z^{(c)}_{l} - \tilde{z}^{(c)}_{l} )^{2}+
(z^{(q)}_{l} - \tilde{z}^{(q)}_{l})^{2} + 
(z_{k} - \tilde{z}_{k})^{2} \right]. 
\end{equation*}
While the contribution from the first term is given by \eqref{eq.classical.risk}, the `quantum components' have different variances due to the fact that we used a different heterodyne measurement. By using \eqref{eq.ql.qk} and the fact that heterodyne adds a factor $1/2$ to the variance of canonical coordinates we obtain
\begin{align}
\mathbb{E}\left[ (z^{(q)}_{l} - \tilde{z}^{(q)}_{l})^{2} \right]
&=
\pi_{0}\sin^{2}\varphi_{0}(r_{0}+1) + \pi_{1}\sin^{2}\varphi_{1}(s_{0}+1)
\nonumber\\
\mathbb{E}\left[ (z_{k} - \tilde{z}_{k})^{2} \right]
&=
\pi_{0}(r_{0}+1) + \pi_{1}(s_{0}+1)
\end{align}
Adding the three contributions we get
\begin{align}\label{eq.risk.plugin.estimation}
4\|d_{0}\|R_{max} (\tilde{\vec{z}}~; ~\rho_{0},\sigma_{0})
&=
 2+ \pi_{0}(r_{0}\sin^{2}\varphi_{0}+ r_{0}- r_{0}^{2}\cos^{2}\varphi_{0})\nonumber\\
&~+ 
\pi_{1}( s_{0}\sin^{2}\varphi_{1}+s_{0}- s_{0}^{2}\cos^{2}\varphi_{1}). 
\end{align}
\begin{theorem}
Consider the quantum classification problem with training set 
$\rho^{\otimes \pi_{0}n}\otimes \sigma^{\otimes \pi_{1} n}$ where 
$\rho,\sigma$ are unknown qubit states and $(\pi_{0},\pi_{1})$ are known. 

Under the assumption \eqref{eq.assumption}, the asymptotic renormalised maximum risk $R_{max} (\tilde{\vec{z}}~; ~\rho_{0},\sigma_{0})$ of the plug-in classifier \eqref{eq.plug.in.estimation} is given by \eqref{eq.risk.plugin.estimation}.
\end{theorem}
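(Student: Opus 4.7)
The plan is to follow the same local-asymptotic-normality (LAN) reduction used in Theorem \ref{th.main}, but replace the optimal joint measurement of $(Q^{(l)},Q^{(k)})$ by the two separate heterodyne measurements of the state-estimation procedure, and then re-derive the quadratic risk. First I would invoke Theorem \ref{th.qlan.qubits} on each half of the training set to transfer $\rho^{\otimes \pi_0 n}\otimes\sigma^{\otimes \pi_1 n}$ onto the Gaussian shift model $\mathcal{G}^{(2)}$ of \eqref{eq.g2}; since the loss \eqref{loss.fct.quadratic} is uniformly bounded on the local neighbourhood $\|\vec u\|,\|\vec v\|\le n^{\epsilon}$, the Le~Cam convergence together with the argument of \cite{Guta&Janssens&Kahn} shows that the asymptotic risk of the plug-in procedure coincides with the risk of the analogous strategy applied to $\mathcal{G}^{(2)}$.

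Next I would compute the limiting distribution of $\tilde{\vec{u}}_n$ and $\tilde{\vec{v}}_n$. In the Gaussian model the third ``classical'' coordinate is observed directly with variance $1-r_0^2$ (respectively $1-s_0^2$), while $(u_1,u_2)$ (and similarly for $\vec v$) are estimated by heterodyning $(Q_1,P_1)$; heterodyning a displaced thermal state of variance $1/(2r_0)$ yields independent unbiased estimators of the displacements, each with variance $1/(2r_0)+1/2=(1+r_0)/(2r_0)$. After rescaling by the factors $\sqrt{\pi_0/(2r_0)}$ appearing in $\Phi_{\vec u}$, the two estimators $\tilde u_1,\tilde u_2$ have variance $(1+r_0)/\pi_0$, and analogously for $\tilde v_1,\tilde v_2$ with $s_0,\pi_1$.

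I would then compute $\tilde{\vec z}_\perp=\pi_0\tilde{\vec u}-\pi_1\tilde{\vec v}$ projected on the plane orthogonal to $\vec d_0$ and expand it in the $(\vec l_0,\vec k_0)$ basis exactly as in Section \ref{sec.theorem}, obtaining
\[
\tilde z_l=\tilde z_l^{(c)}+\tilde z_l^{(q)},\qquad \tilde z_k=\pi_0\tilde u_2-\pi_1\tilde v_2.
\]
The classical part $\tilde z_l^{(c)}$ uses the same direct observation of $(X_r,X_s)$ as in the optimal strategy, so its contribution to $\mathbb{E}\|\vec z_\perp-\tilde{\vec z}\|^2$ is again \eqref{eq.classical.risk}. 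For the quantum parts, independence of the two heterodyne measurements together with the variance computation of the previous step gives
\[
\mathbb{E}[(z_l^{(q)}-\tilde z_l^{(q)})^2]=\pi_0\sin^2\varphi_0\,(r_0+1)+\pi_1\sin^2\varphi_1\,(s_0+1),
\]
\[
\mathbb{E}[(z_k-\tilde z_k)^2]=\pi_0(r_0+1)+\pi_1(s_0+1).
\]
Summing the three contributions and dividing by $4\|\vec d_0\|$ yields \eqref{eq.risk.plugin.estimation}.

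The one delicate step is justifying that the normalisation in \eqref{eq.plug.in.estimation}, which replaces $\tilde{\vec z}$ by $\tilde{\vec z}_\perp$, does not alter the leading-order risk; this follows from the same Taylor expansion used to derive \eqref{loss.fct.quadratic}, since the component of $\tilde{\vec z}$ parallel to $\vec d_0$ contributes only at order $n^{-3/2}$ to $\hat{\vec p}_n-\vec p_0$ and therefore only $o(n^{-1})$ to the excess risk. Everything else is bookkeeping, and no minimax optimisation is needed because the plug-in measurement on the training set is fixed by the state-estimation prescription.
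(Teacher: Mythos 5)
Your proposal is correct and follows essentially the same route as the paper: reduce to the Gaussian shift model via LAN, note that the classical contribution is unchanged, and account for the separate heterodyne measurements by adding $1/2$ to the variance of each canonical coordinate, which reproduces the two quantum variance terms and hence \eqref{eq.risk.plugin.estimation}. Your explicit variance bookkeeping for $\tilde u_i,\tilde v_i$ and the remark on why replacing $\tilde{\vec z}$ by $\tilde{\vec z}_\perp$ is harmless simply spell out steps the paper leaves implicit.
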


Comparing the minimax risk \eqref{eq.optimal.risk} with the risk \eqref{eq.risk.plugin.estimation} of the plug-in classifier we get
$$
R_{max} (\tilde{\vec{z}}~; ~\rho_{0},\sigma_{0})-
R^{(l)}_{minmax}(\rho_{0},\sigma_{0})= 
\pi_{0}r_{0}(1\pm \sin\varphi_{0})^{2} + 
\pi_{1}s_{0}(1\mp \sin\varphi_{1})^{2},
$$
with the signs are chosen according to the sign of 
$\pi_{0}r_{0}\sin\varphi_{0}-\pi_{1}s_{0}\sin\varphi_{1}$. This quantity is equal to zero if and only if $\sin\varphi_{0}=\mp 1$ and $\sin\varphi_{1}=\pm 1$ which means that the vectors $\vec{r}_{0} $ and $\vec{s}_{0}$ are parallel and point in the same  direction. For fixed priors, the difference is maximum when the $\vec{r}_{0}$ and $\vec{s}_{0}$ point in opposite directions and have length one. 

This can be easily understood from the Gaussian model. When the vectors are parallel then learning requires an optimal joint measurement of {\it non-commuting} variables $(Q_{1}-Q_{2}, P_{1}-P_{2})$ whose risk is the same as that of heterodyning the oscillators first and constructing linear combinations. In the anti-parallel case we  need to measure {\it commuting} variables $(Q_{1}+Q_{2}, P_{1}-P_{2})$ which can be done directly, without any loss.

\subsection{The case of unknown priors}\label{sec.priors}
The analysis so far deals with known priors $\pi_{0},\pi_{1}$, which is the 
standard set-up usually considered in quantum statistics. In general, the priors may be unknown but can be estimated from the training set with a standard $n^{-1/2}$ error. Since the Helstrom measurement depends also on $(\pi_{0},\pi_{1})$, this uncertainty will bring an additional contribution to the excess risk. To find it, one needs to go back to the derivation of the quadratic loss function and add another unknown local parameter $\delta$ for the prior: $\pi_{0}= q_{0}+\delta/\sqrt{n}$.

Then 
\eqref{eq.oracle.bloch} becomes
\begin{equation}
\vec{p} 
= \frac{\vec{d} }{\| \vec{d} \|} 
= 
\frac{
\vec{d}_{0} + 
\frac{\vec{z}}{\sqrt{n}} +
\frac{\delta (\vec{r}_{0}+\vec{s}_{0} )}{\sqrt{n}}
}
{\left\|
\vec{d}_{0} + \frac{\vec{z}}{\sqrt{n}} +
\frac{\delta (\vec{r}_{0}+\vec{s}_{0} )}{\sqrt{n}}.
\right\|},
\end{equation}
By going through the same steps, we get to the quadratic loss function

\begin{equation}\label{loss.fct.quadratic.general}
L((\vec{u},\vec{v},\delta), \hat{\vec{z}}_n):= 
\frac{1}{4 \|\vec{d}_{0} \|} \| \vec{z}_{\perp} +\delta (\vec{r}_{0}+\vec{s}_{0})_{\perp}-\hat{\vec{z}}_n\|^{2}, 
\end{equation}
where $(\vec{r}_{0}+\vec{s}_{0})_{\perp}$ is the component orthogonal to 
$\vec{p}_{0}$. 

As before, the training set can be cast into a Gaussian model, with an additional independent component $Z\sim N(\delta,\pi_{0}\pi_{1})$. This means that when taking the expectation of $L$ we get an additional factor
$$
\frac{\|(\vec{r}_{0}+\vec{s}_{0})_{\perp} \|^{2}}{4 \|\vec{d}_{0} \|} Var(Z)= 
\frac{\pi_{0}\pi_{1} \|(\vec{r}_{0}+\vec{s}_{0})_{\perp} \|^{2}}{4 \|\vec{d}_{0} \|}. 
$$

\section{Conclusions}

We solved the problem of classifying two qubit states in the asymptotic local minimax statistical framework. Asymptotically the problem reduces to that of optimally estimating a sub-parameter of a quantum Gaussian model consisting of two independent oscillators in displaced thermal states with unknown means. The estimator is then used to construct an approximation of the (unknown) 
Helstrom measurement which is used to classify unlabelled states. The optimal procedure has excess risk of order $n^{-1}$ and we computed the exact constant factor $R^{(l)}_{minmax}(\rho_{0},\sigma_{0})$ as function of the two unknown states. Except in the special case of states with parallel Bloch vectors, the optimal procedure performs strictly better than the plug-in classifier obtained by estimating the states and applying the corresponding Helstrom measurement. The difference is only a constant factor, but it would probably become significant in more interesting infinite dimensional models.

Finally let us briefly discuss the Bayesian analogue of our result. In the Bayesian framework one would choose a `regular' prior $\mu(d\rho \times d\sigma)$  over 
the two types of states and try to find the (asymptotically) optimal Bayes risk for this prior
$$
R^{\mu}_{opt} := \limsup_{n\to\infty} \inf_{\widehat{M}_{n}} \,n \,
\int \mu(d\rho \times d\sigma) R_{(\rho,\sigma)}(\widehat{M}_{n}).
$$ 
When the states are pure and the prior is uniform, this has been done (even non-asymptotically) in \cite{Hayashi&Horibe}, but the proof relies on the symmetry of the prior and cannot be applied to general ones, and mixed states. Based on a similar analysis done for state estimation\cite{Belavkin&Guta}, we expect that our result can be used to prove that 
$$
R^{\mu}_{opt}=\int R^{(l)}_{minmax}(\rho_{0},\sigma_{0})\mu(d\rho_{0} \times d\sigma_{0}).
$$
The intuitive explanation is that when $n\to\infty$ the features of the prior $\mu$ 
are washed out and the posterior distribution concentrates in a local neighbourhood of the true parameter, where the behaviour of the classifiers is governed by the local minimax risk. Proving this relation is however beyond the scope of this paper. 

%
\ack

We thank Richard Gill for useful discussions. M.G. was supported by the EPSRC Fellowship EP/E052290/1. 

\section*{References}
\providecommand{\newblock}{}

\end{document}